%
%
%
%

%

%

%

\documentclass{toc}

\usepackage{dsfont}
\usepackage{mathtools}
\usepackage{xspace}

\newcommand{\mc}[1]{\ensuremath{\mathcal{#1}}\xspace}
\newcommand{\mb}[1]{\ensuremath{\mathbb{#1}}\xspace}
\newcommand{\mv}[1]{\ensuremath{\mathbf{#1}}\xspace}
\newcommand{\tn}[1]{\ensuremath{\textnormal{#1}}\xspace}

\DeclareMathOperator{\Tr}{{\mathsf{Tr}}}
\newcommand{\Id}{\mathds{1}}

\newcommand{\Sym}{\mathbb S}

\newcommand{\eps}{\varepsilon} 
\renewcommand{\epsilon}{\varepsilon}

\newcommand{\pmset}[1]{\{-1,1\}^{#1}} %
\newcommand{\st}{:\,} %

\renewcommand{\Pr}{\mbox{\rm Pr}}

\newcommand{\Exp}{\mathbf{E}}
\newcommand{\parea}{\Lambda}
\DeclareMathOperator{\opt}{\mathsf{OPT}}
\DeclareMathOperator{\sdp}{\mathsf{SDP}}

\newcommand{\beq}{\begin{equation}}
\newcommand{\eeq}{\end{equation}}
\newcommand{\beqn}{\begin{equation*}}
\newcommand{\eeqn}{\end{equation*}}
\newcommand{\beqr}{\begin{eqnarray}}
\newcommand{\eeqr}{\end{eqnarray}}
\newcommand{\beqrn}{\begin{eqnarray*}}
\newcommand{\eeqrn}{\end{eqnarray*}}
\newcommand{\bmline}{\begin{multline}}
\newcommand{\emline}{\end{multline}}
\newcommand{\bmlinen}{\begin{multline*}}
\newcommand{\emlinen}{\end{multline*}}

\DeclarePairedDelimiter\inner{\langle}{\rangle}
\DeclarePairedDelimiter\ceil{\lceil}{\rceil}
\DeclarePairedDelimiter\length{\lVert}{\rVert}
\tocdetails{%
volume=13, number=15, year=2017, firstpage=1,
received={February 2, 2016},
revised={January 20, 2017},
published={December 2, 2017},
doi={10.4086/toc.2017.v013a015},
author={Jop Bri\"{e}t, Oded Regev, and Rishi Saket},
plaintextauthor={Jop Briet, Oded Regev, Rishi Saket},
title={Tight Hardness of the Non-Commutative Grothendieck Problem},
acmclassification={G.1.6},
amsclassification={68Q17, 15A60, 32A70, 03D15},
keywords={hardness of approximation, semidefinite programming, Grothendieck inequality}
}

\begin{document}

\begin{frontmatter}[classification=text]
\title{Tight Hardness of the\\ Non-Commutative Grothendieck Problem\titlefootnote{A conference version of this paper appeared in the 
Proceedings of the 56th Annual IEEE Symposium on
Foundations of Computer Science (FOCS
2015)~\cite{BrietRegevSaket:FOCS}.}}

\author[briet]{Jop Bri\"et\thanks{Supported by a Rubicon grant from the Netherlands Organisation for Scientific Research (NWO).}}
\author[regev]{Oded Regev\thanks{Supported by the Simons Collaboration on Algorithms and Geometry and by the National Science Foundation (NSF) under Grant No.~CCF-1320188. Any opinions, findings, and conclusions or recommendations expressed in this material are those of the authors and do not necessarily reflect the views of the NSF.}}
\author[saket]{Rishi Saket}

\begin{abstract}
We prove that for any~$\eps > 0$ it is \cclass{NP}-hard to approximate the non-commutative Grothendieck problem to within a factor~$1/2 + \eps$, which matches the approximation ratio of the algorithm of Naor, Regev, and Vidick (STOC'13).
Our proof uses an embedding of~$\ell_2$ into the space of matrices endowed with the trace norm with the property that the image of
standard basis vectors is longer than that of unit vectors with no large coordinates. 
We also observe that one can obtain a tight \cclass{NP}-hardness result for the \emph{commutative} Little Grothendieck problem;
previously, this was only known based on the 
Unique Games Conjecture    %
(Khot and Naor, Mathematika 2009).
\end{abstract}

%

\end{frontmatter}

\section{Introduction}

The  subject of this paper, the \emph{non-commutative Grothendieck problem}, has its roots in celebrated work of Grothendieck~\cite{Grothendieck:1953}, sometimes (jokingly?) referred to as ``\emph{Grothendieck's r\'esum\'e}.''
His paper  laid the foundation for the study of the geometry of tensor products of Banach spaces, though its significance  only became widely recognized after it was
revamped by Lindenstrauss and Pe\l czy\'nski~\cite{Lindenstrauss:1968}.
The main result of the paper, now known as \emph{Grothendieck's inequality}, shows a close relationship between the following two quantities.
For a complex~$d\!\times\!d$ matrix~$M$ let
\beq\label{eq:opt-groth}
\opt(M) = \sup_{\alpha_i,\beta_j}\Big|\sum_{i,j=1}^d M_{ij} \alpha_i \overline{\beta_j}\Big|\,,
\eeq
where the supremum goes over scalars on the complex unit circle, 
and let
\beq\label{eq:sdp-groth}
\sdp(M) =
\sup_{a_i,b_j}\Big|\sum_{i,j=1}^dM_{ij} \langle a_i, b_j\rangle\Big|\,, 
\eeq
where the supremum goes over vectors on a complex Euclidean unit sphere of any dimension.
Since the circle is the sphere in dimension one, we clearly have $\sdp(M)\ge \opt(M)$.
Grothendieck's inequality states that there exists a universal constant~$K_G^{\C}<\infty$ such that for 
any    %
positive integer~$d$ and any~$d\!\times\!d$ matrix~$M$, we also have
$
\sdp(M)
\leq 
K_G^{\C}\,\opt(M).
$
This result found an enormous number of applications both within and far beyond its original scope and we give some examples below (see~\cite{Khot:2012,Pisier:2012} for extensive surveys).
Despite this, finding the optimal value of~$K_G^\C$ is the only one of six problems posed in~\cite{Grothendieck:1953} that remains unsolved today; the current best upper and lower bounds are $1.4049$~\cite{Haagerup:1987} and $1.338$~\cite{Davie:1984}, respectively.
The situation is similar for the real variant of the problem, where all objects involved are over the real numbers.
The constant in that case is denoted~$K_G$ and is known to be between $1.6769$ and $1.7823$ (see~\cite{Braverman:2013}).

The non-commutative Grothendieck problem, 
to which we will refer  %
as the \emph{NCG}, is the optimization problem in which we are 
asked to maximize a given bilinear form over 
all pairs of  %
unitary matrices. More explicitly, we are given a four-dimensional array of complex numbers~$(T_{ijkl})_{i,j,k,l=1}^d$ and are asked to find or approximate the value
\beq\label{eq:opt-ncgt}
\opt(T) = \sup_{A,B} \Big|\sum_{i,j,k,l=1}^d T_{ijkl} A_{ij} \overline{B_{kl}}\Big|\,,
\eeq
where the supremum is over pairs of~$d\!\times\! d$ unitary matrices. (The word ``non-commutative'' simply refers to the fact that optimization is over matrices.)
It is not difficult to see that the (commutative) Grothendieck problem of computing $\opt(M)$ as in~\eqref{eq:opt-groth} is the special case where $T$ has $T_{iijj} = M_{ij}$ and zeros elsewhere.
Seen at first, the problem might seem overly abstract, but in fact, as 
we will illustrate below, it captures many natural questions as special cases. 
Grothendieck conjectured that his namesake inequality has an extension that relates~\eqref{eq:opt-ncgt} and the quantity
\beq\label{eq:sdp-ncgt}
\sdp(T) = \sup_{\vec{A}, \vec{B}} \Big|\sum_{i,j,k,l=1}^dT_{ijkl} \big\langle\vec{A}_{ij}, \vec{B}_{kl}\big\rangle\Big|\,,
\eeq
where~$\vec{A}, \vec{B}$ range over all~$d\!\times\!d$ matrices whose entries are complex \emph{vectors} of arbitrary dimension satisfying
a certain ``unitarity'' constraint.\footnote{Namely, we require that $\vec{A}^*\vec{A} = \Id$ and $\vec{A}\vec{A}^* = \Id$ and similarly for $\vec{B}$, 
where the multiplication of two vector-entried matrices is a scalar-valued matrix computed just like a normal matrix multiplication except
the scalar multiplication is replaced by an inner product, \eg, the $(i,j)$-coordinate of~$\vec{A}\vec{A}^*$ is given 
by~$\sum_k\inner{\vec{A}_{ik},\vec{A}_{jk}}$.}
Namely, he conjectured that there exists a universal constant~$K<\infty$ such that for every positive integer~$d$ and array~$T$ as above, we have~$\opt(T) \leq \sdp(T) \leq K \opt(T)$, where the first inequality follows immediately from the definition.
Over twenty-five years after being posed, the non-trivial content of Grothendieck's conjecture, $\sdp(T) \leq K \opt(T)$, was finally settled in the positive by Pisier~\cite{Pisier:1978}.
This result is now known as the non-commutative Grothendieck inequality.
In contrast with the commutative case, and somewhat surprisingly, the optimal value of~$K$ \emph{is} known:
Haagerup~\cite{Haagerup:1985} lowered Pisier's original estimate to~$K \leq 2$ and this was later shown to be sharp by Haagerup and Itoh~\cite{Haagerup:1995}.

\paragraph{Algorithmic applications.}
The importance of Grothendieck's inequality to computer science was pointed out by Alon and Naor~\cite{Alon:2006}, who placed it in the context of approximation algorithms for combinatorial optimization problems.
They observed that computing~$\sdp(M)$ is a semidefinite programming (SDP) problem that can be solved efficiently (to within arbitrary precision), and they translated an upper bound of about $1.78$ on~$K_G$ due to Krivine~\cite{Krivine:79a} to an efficient rounding scheme that turns SDP vectors into a feasible solution for the real Grothendieck problem~\eqref{eq:opt-groth} achieving value at least~$\sdp(M)/1.78$.
It is known that whatever the value of~$K_G$ is, there exists an efficient algorithm achieving value at least~$(1/K_G - \eps)\sdp(M)$ for any constant~$\eps>0$. (This was first shown in~\cite{Raghavendra:2009}, but can also be derived
from the results in~\cite{BrietFV10}
using a simple discretization argument combined with a brute-force search.)
The Grothendieck problem shows up in a number of different areas such as graph partition problems and computing the Cut-Norm of a matrix~\cite{Alon:2006}, in statistical physics where it gives ground state energies in the spin glass model~\cite{Kindler:2010}, and in quantum physics where it is related to Bell inequalities~\cite{Tsirelson:85b}.

In the same spirit, Naor, Regev, and Vidick~\cite{Naor:2013} recently translated the \emph{non-commutative} Grothendieck inequality into an efficient SDP-based approximation algorithm for the NCG problem~\eqref{eq:opt-ncgt} that achieves value at least~$\sdp(T)/2$.
They also considered the real variant and a Hermitian variant, for which they gave analogous algorithms achieving value at least~$\sdp(T)/2\sqrt{2}$.
This in turn implies efficient constant-factor approximation algorithms for a variety of problems, including the Procrustes problem and two robust versions of Principal Component Analysis~\cite{Naor:2013} and quantum XOR games~\cite{RegevV12a}. 
In a related result, Bandeira et al.~\cite{Bandeira:2013} 
considered a special case of the NCG (in fact a special case even 
of the Little NCG defined below) and showed how to obtain better approximation
factors for it. They also show that it is rich enough to capture
some applications such as the Procrustes problem and another natural problem called the Global Registration Problem.

\paragraph{Hardness of approximation.}
For simplicity we momentarily turn to the real setting, but similar results hold over the complex numbers.
The Grothendieck problem contains {\sc MaxCut} as a special case (in fact, it is a special case of the ``Little Grothendieck Problem,'' discussed below, in which~$M$ is the positive semidefinite Laplacian matrix of a graph~\cite{Alon:2006}).
It therefore follows from H\aa stad's inapproximability result~\cite{Hastad:2001} that it is \cclass{NP}-hard to approximate the value~\eqref{eq:opt-groth} to any factor larger than~$16/17 \approx .941$.
Based on the current best-known lower bound of about~$1.676$ on~$K_G$, Khot and O'Donnell~\cite{Khot:maxcutgain} proved that \eqref{eq:opt-groth} is Unique-Games-hard to approximate to within a factor larger than~$1/1.676\approx .597$.
Moreover, despite the fact that the exact value of~$K_G$ is still unknown, Raghavendra and Steurer~\cite{Raghavendra:2009} were able to improve this Unique Games hardness to~$1/K_G$.
(See for instance~\cite{Khot:CCC2010, Trevisan:2012} for background on the Unique Games conjecture.)

\paragraph{Our result.}
Whereas the hardness situation for the commutative version of Grothendieck's problem is reasonably well understood (apart from the yet-unknown exact value of $K_G$), no tight hardness result was previously known for the non-commutative version. In fact, we are not even aware of any hardness result that is better than what follows from the commutative case. 
Here we settle this question. 

\begin{theorem}\label{thm:ncgt-ughard}
For any constant~$\eps > 0$ it is \cclass{NP}-hard to approximate the optimum~\eqref{eq:opt-ncgt} of the non-commutative Grothendieck problem to within a factor greater than~$1/2+\eps$.
\end{theorem}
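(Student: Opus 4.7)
The plan is to reduce from a Smooth Label Cover problem (yielding full NP-hardness) via the dictatorship-test framework, with the non-commutative Grothendieck tensor $T$ playing the role of the test. Given a Label Cover instance $\mathcal{L}$ with vertex sides $U,V$ and projections $\pi_{uv}$, I would construct $T$ so that the unitaries $A$ and $B$ block-decompose into pieces $A_u,B_v$, one per vertex, each living in a matrix algebra indexed by an ``encoding'' alphabet (a long-code-style domain, suitably adapted to the matrix setting). The tensor is then the average, over random edges $(u,v)$, of a local test that pairs $A_u$ with a $\pi_{uv}$-permuted image of $B_v$.

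The key technical device is the embedding $\phi:\ell_2^n\to M_N(\mathbb{C})$ announced in the abstract, where the codomain is equipped with the trace (Schatten-$1$) norm and $\phi$ satisfies $\|\phi(e_i)\|_{1}=1$ for each standard basis vector while $\|\phi(x)\|_{1}\le 1/2+\varepsilon$ for every unit vector with $\|x\|_\infty$ sufficiently small. A natural candidate is $\phi(e_i)=C_i/\sqrt{N}$ for a family $\{C_i\}$ of rank-one or Rademacher-type matrices, chosen so that orthonormal superpositions realize the matrix-Khintchine / Haagerup inequality and lose the factor of $2$ in trace norm that matches the optimal non-commutative Grothendieck constant $K=2$.

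For completeness, fix a satisfying labeling $\sigma$ of $\mathcal{L}$ and set each $A_u$ (resp.\ $B_v$) to be the dictator unitary corresponding to $\sigma(u)$ (resp.\ $\sigma(v)$); on each edge the local test reduces to $\|\phi(e_{\sigma(u)})\|_1=1$, yielding $\opt(T)\ge 1-o(1)$. For soundness, suppose $\opt(T)\ge 1/2+2\varepsilon$ and fix optimal $A,B$. Maximizing over $A$ given $B$ converts the local test into a trace norm $\|\phi(x_{u,v}(B))\|_1$, where the vector $x_{u,v}(B)$ is built linearly from the Pauli/Fourier coefficients of the block $B_v$. Either some block carries a high-influence character---in which case smoothness of $\mathcal{L}$ lets us decode it into a labeling satisfying $\Omega_\varepsilon(1)$ of the edges---or every $x_{u,v}(B)$ has small $\ell_\infty$ norm and the embedding property forces the expected test value below $1/2+\varepsilon$, contradicting the assumption.

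The main obstacle is the embedding itself: constructing $\phi$ with the sharp constant $1/2$ (not merely some universal constant) and compatible enough with the non-commutative algebraic structure so that the soundness decoding actually extracts meaningful coordinates of $x$. A secondary difficulty is carrying out the influence analysis on unitary-valued variables, where scalar Fourier tools are unavailable and one must instead work with matrix-valued analogues of hypercontractivity and the invariance principle; orchestrating these so that exactly the factor of $2$ from Haagerup's theorem appears on the soundness side is the delicate heart of the argument.
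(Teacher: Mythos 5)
Your plan diverges from the paper's on three fronts, and all three are substantive.

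First, the target constant for your embedding is off in a way that would make it impossible. You want $\|\phi(x)\|_{S_1} \le 1/2+\eps$ for spread unit vectors while $\|\phi(e_i)\|_{S_1} = 1$. But the Little NCG problem (computing $\|\mathcal{F}\|$ for a linear map $\mathcal{F}:\C^n\to S_1$) admits an efficient $1/\sqrt{2}$-approximation algorithm, so no linear map into $S_1$ can separate dictators from spread vectors by more than a factor $1/\sqrt{2}$ (otherwise the paper's own framework would prove $P=NP$). The paper's embedding achieves precisely $\|f(a)\|_{S_1} \le \sqrt{(\|a\|_{\ell_2}^2 + \|a\|_{\ell_4}^2)/2}$, which is $\approx 1/\sqrt{2}$ for spread unit vectors. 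The factor $1/2$ for the big NCG does \emph{not} come from a stronger embedding; it comes from the relation $\opt(T)=\|\mathcal{F}\|^2$ between the big and Little NCG (set $T(A,B)=\langle\mathcal{F}^*(A),\mathcal{F}^*(B)\rangle$; completeness gives $\opt(T)\ge 1$, soundness gives $\opt(T)<1/2+O(\eps)$). Your plan to attack the bilinear form directly skips this squaring step and thereby demands an embedding that cannot exist.

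Second, the paper explicitly reports that the Fourier/influence/long-code route you propose failed for them, and it is not what they use. There is no encoding alphabet, no long-code tables, no matrix hypercontractivity or invariance principle, and no decoding of ``high-influence characters.'' Instead, each vertex $v$ of the Smooth Label Cover graph carries a single vector $a_v\in\C^n$ (one coordinate per label), the Label Cover projections are enforced as homogeneous linear constraints on the $a_v$'s (projection-consistent partial sums must agree across each edge), and the dictatorship test is literally $a_v = e_{A(v)}$. Soundness is elementary --- partition vertices by $\ell_2$/$\ell_4$ norms, observe that a large fraction of vertices have some large coordinate in $b_v$, and decode via smoothness --- with no spectral machinery whatsoever. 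The structural device that replaces the long code is this linear subspace $\mathcal{H}\subseteq L_2(V,\C^n)$ cut out by the projection constraints.

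Third, the embedding itself is not what you conjecture. ``Rank-one or Rademacher-type matrices'' do not give the sharp constant: the paper notes that random unitaries give $8/(3\pi)\approx 0.848$ by Wigner's semicircle law, far from $1/\sqrt{2}$. The construction actually uses the complex Clifford algebra: set $C(a)=a_1 C_1+\cdots+a_n C_n$ with pairwise anti-commuting Hermitian unitaries $C_j$. The key observation is that $\|C(a)\|_{S_1}$ is controlled by $\Lambda(a)$, the area of the parallelogram spanned by $\Re(a)$ and $\Im(a)$, which vanishes for standard basis vectors but is typically large for spread vectors. Averaging over random phases $\omega\in\{1,i,-1,-i\}^n$, using $\Exp[\Lambda(a\circ\omega)^2]=(\|a\|_{\ell_2}^4-\|a\|_{\ell_4}^4)/4$ and Jensen, yields the $\ell_4$-sensitive bound the soundness argument requires. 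This phase-cancellation phenomenon over $\C$ --- not an influence analysis of unitary-valued functions --- is the technical heart of the argument.
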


\paragraph{Little Grothendieck.}
In fact, we prove a stronger result than \expref{Theorem}{thm:ncgt-ughard} that concerns a special case of the NCG called the \emph{Little NCG}.
Let us start by describing the (real case of the) \emph{commutative} Little Grothendieck problem
(a.\,k.\,a.\ the positive-semidefinite Grothendieck problem).
A convenient way to phrase it is as asking for the operator norm of a linear map $\mathcal F:\R^n \to \ell_1^d$
(where $\ell_p^d$ denotes $\R^d$ endowed with the $\ell_p$ norm), 
defined as~$\length{\mathcal F} = \sup_a \length{{\mathcal F}(a)}_{\ell_1}$ where the vector~$a$ ranges over the~$n$-dimensional Euclidean unit ball.
It turns out that this is a special case of (the real version of) Equation~\eqref{eq:opt-groth}: for any ${\mathcal F}$ there exists a positive semidefinite~$d\!\times\!d$ matrix~$M$ such that~$\opt(M) = \length{\mathcal F}^2$; and vice versa, one can also map any such~$M$ into a corresponding operator~$\mathcal F$ (see, \eg,~\cite{Pisier:2012} or \expref{Section}{sec:little-big}).
We wish to highlight that for such instances, the constant~$K_G$ may be replaced by the smaller value~$\pi/2$~\cite{Rietz:1974} and that this value is known to be optimal~\cite{Grothendieck:1953}. 
Moreover, Nesterov made this algorithmic, namely, he showed an algorithm that approximates $\length{\mathcal F}$ as above
to within~$\sqrt{2/\pi}$~\cite{Nesterov:1998}.
Finally, Khot and Naor~\cite{Khot:2009}, as part of a more general result, showed that this is tight: the Unique-Games-hardness threshold for the Little Grothendieck problem is exactly $\sqrt{2/\pi}$.
As an aside, we note that other operator norms, in particular of operators in
$\R^n \to \ell_4^d$, have played an important role recently in theoretical computer science
(see, \eg, \cite{BarakBHKSZ12}).

The Little \emph{non-commutative} Grothendieck problem is formulated in terms of the (normalized) trace norm, also known as the Schatten-1 norm, which for a $d\!\times\!d$ matrix~$A$ is given by $\length{A}_{S_1} = d^{-1}\Tr \sqrt{A^*A}$. 
In other words, $\length{A}_{S_1}$ is the average of the singular values of $A$.
The space of matrices endowed with this norm is denoted by~$S_1$ and by~$S_1^d$ if we restrict to~$d\!\times\!d$ matrices.
The problem then asks for the operator norm of a linear map~$\mathcal F:\C^n\to S_1^d$.
This problem is a special case of the NCG where $\opt(T) = \length{\mathcal F}^2$ (see \expref{Section}{sec:little-big}).
In particular, it follows from~\cite{Haagerup:1985, Naor:2013} that there is an efficient SDP-based $1/\sqrt{2}$-approximation algorithm for the Little NCG\@.
Our stronger result alluded to above shows tight hardness for the Little NCG, which directly implies \expref{Theorem}{thm:ncgt-ughard}.

\begin{theorem}\label{thm:lncg-ughard}
  \begin{sloppypar}
    For any constant~$\eps>0$ it is \cclass{NP}-hard to approximate the Little non-commutative Grothendieck problem to within a factor greater than~$1/\sqrt{2} + \eps$.
    \end{sloppypar}
\end{theorem}

While this result applies to the complex case, an easy transformation shows that it directly implies the same result for the real and Hermitian cases introduced in~\cite{Naor:2013} (see \expref{Section}{sec:NC-RH}).
Finally, as we show in the ``warm-up'' section of this paper (\expref{Section}{sec:comm}), we also get a tight \cclass{NP}-hardness result for the \emph{commutative} Little Grothendieck problem, strengthening the unique-games-based result of~\cite{Khot:2009}.

\begin{theorem}\label{thm:comm-hard}
For any constant~$\eps>0$ it is \cclass{NP}-hard to approximate the real Little commutative Grothendieck problem to within a factor greater than~$\sqrt{2/\pi} + \eps$.
Similarly, the complex case is \cclass{NP}-hard to approximate to within a factor greater than~$\sqrt{\pi/4} + \eps$.
\end{theorem}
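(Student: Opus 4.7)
My approach is to carry out a gap-preserving reduction from a hard variant of Label Cover, composed with a Long Code gadget whose soundness-to-completeness ratio matches the Khintchine-type constant $\sqrt{2/\pi}$ in the real case and the complex Gaussian absolute moment $\sqrt{\pi/4}$ in the complex case. Since the goal is NP-hardness rather than Unique-Games hardness (the latter being what Khot--Naor~\cite{Khot:2009} obtain), I will start from \emph{smooth} Label Cover, a variant whose gap version is NP-hard and whose projection structure behaves well under dictatorship-test decoding.

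Concretely, for a Smooth Label Cover instance with sides $U,V$, label set $[R]$, and projections $\pi_{uv}\colon[R]\to[R]$, I introduce one Long Code coordinate $y_{v,x}$ for each $v\in U\cup V$ and $x\in\{\pm 1\}^R$, viewing the stacked $(y_{v,x})$ as the $\ell_2^n$ input to a Little Grothendieck map $\mathcal F$. The rows of $\mathcal F$ (the $\ell_1^d$ side) are edge-consistency test functionals: for each edge $(u,v)$ and a suitable distribution over pairs $(x,x')\in\{\pm 1\}^R\times\{\pm 1\}^R$ respecting $\pi_{uv}$ and possibly containing a small amount of noise, a row contributes a linear combination of the form $y_{v,x}-y_{u,x'}$. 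Completeness then follows by plugging in the dictator assignment $y_{v,x}=x_{\ell(v)}$ coming from a satisfying labeling: each consistency row contributes its full magnitude because $\mathbb E_x|x_i|=1$, yielding value $1$ after normalization. For soundness, given a unit assignment $y$ achieving value $\sqrt{2/\pi}+\eps$, I Fourier-expand each vertex function $f_v\colon\{\pm 1\}^R\to\mathbb R$ built from the $y_{v,\cdot}$'s. Either some coordinate has large influence---in which case smoothness of the Label Cover lets me decode a labeling satisfying a non-negligible fraction of edges---or every $f_v$ has only small influences, in which case I invoke the Mossel--O'Donnell--Oleszkiewicz invariance principle to replace the Boolean inputs by Gaussians. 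In the Gaussian regime the $\ell_1$-contribution of each row is bounded by $\mathbb E_g|\langle g,a\rangle|=\sqrt{2/\pi}\,\|a\|_2$, which, summed over rows and combined with Cauchy--Schwarz, caps $\|\mathcal F\|$ at $\sqrt{2/\pi}+O(\eps)$.

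The principal technical obstacle is the invariance step: the function $|\cdot|$ is not smooth, so applying MOO directly requires either mollifying the absolute value and tracking the resulting error, or running a hybrid Berry--Esseen argument that handles the contributions of influential and non-influential coordinates separately. A secondary issue, specific to the pursuit of NP-hardness, is calibrating the smoothness parameter of Label Cover jointly with the noise rate of the Long Code so that large joint influences across an edge can be decoded to a consistent label pair; this is exactly where smoothness (rather than the bijection structure underlying Unique Games) is used crucially. Once the real case is handled, the complex case follows via the same blueprint with $\{\pm 1\}^R$ replaced by the $4$-th root cube $\{\pm 1,\pm i\}^R$ (or its complex-Gaussian analogue) and the real moment $\mathbb E|g|=\sqrt{2/\pi}$ replaced by $\mathbb E|g_1+ig_2|=\sqrt{\pi/4}$.
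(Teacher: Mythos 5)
Your route is genuinely different from the paper's, and as sketched it also has a gap that would need to be repaired. The paper deliberately sidesteps the Long Code, the hypercube, and the invariance principle. It does reduce from Smooth Label Cover (Theorem~\ref{thm:sml}), but instead of encoding labels as Boolean functions on $\{-1,1\}^R$, each vertex $v$ carries a vector $a_v \in \R^n$ with $n$ the label alphabet itself, and edge consistency is imposed as \emph{linear constraints}~\eqref{eqn-Q-def} that carve out a feasible subspace $\mathcal H \subseteq L_2(V,\R^n)$ of inputs to $\mathcal F$. The map $\mathcal F$ applies at each vertex the fixed \emph{linear} gadget $f:\R^n \to L_1(\{-1,1\}^n)$, $f(a)(Z)=\sum_i a_i Z_i$ with Rademacher $Z_i$; dictators are simply the standard basis vectors $a_v=e_{A(v)}$, completeness is $\Exp|Z_i|=1$, and soundness is the one-dimensional Berry--Ess\'een theorem together with a smoothness-based decoding argument. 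Because the per-vertex test is a single linear form, Berry--Ess\'een is exactly the right tool and no mollification of $|\cdot|$ nor MOO-type invariance is needed. The complex case is identical with $Z_i\in\{1,i,-1,-i\}$ and Bentkus's two-dimensional Berry--Ess\'een bound.

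The concrete gap in your sketch is in completeness. You take the $\ell_1$-side coordinates to be edge-consistency functionals $y_{v,x}-y_{u,x'}$ with $(x,x')$ correlated through $\pi_{uv}$ (plus noise). But on the dictator assignment $y_{v,x}=x_{\ell(v)}$ coming from a satisfying labeling $\ell$, these functionals are (near-)\emph{zero}, not of full magnitude: with $\pi_{uv}(\ell(u))=\ell(v)$ and $x'=x\circ\pi_{uv}$ one gets $y_{v,x}-y_{u,x'}=x_{\ell(v)}-x_{\pi_{uv}(\ell(u))}=0$, and with noise rate $\rho$ the expected absolute value is $O(\rho)$. So as written, YES instances have small, not large, value, and the reduction does not give a gap. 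You would need a different row structure (or, as in the paper, put consistency into the feasible domain rather than into the $\ell_1$ coordinates) before completeness yields value $1$. Separately, even after such a fix, the invariance step you flag for the non-smooth absolute value and the joint decoding across a projection (not a bijection) are real burdens; the paper's subspace-based reduction avoids them entirely precisely because the test at each vertex is a single linear form rather than a difference of two Long-Code evaluations.
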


\paragraph{Techniques.}
Nearly all recent work on hardness of approximation, including for commutative Grothen\-dieck problems~\cite{Raghavendra:2009, Khot:2009}, 
uses the machinery of Fourier analysis over the hypercube, influences, or 
the
\emph{majority is stablest} %
theorem~\cite{ODonnell:2014}. %
Our attempts to apply these techniques here failed. 
Instead, we use a more direct approach similar to that taken in~\cite{GRSW} and avoid the use of the hypercube altogether. 
The role of dictator functions is played in our proof simply by the standard basis vectors of $\C^n$. The dictatorship test, which is our main technical contribution, comes in the form of a linear operator ${\mathcal F}:\C^n \to S_1^d$ with the following notable property: it maps the $n$ standard basis vectors to matrices with trace norm~1, and it maps any unit vector with no large coordinate to a matrix with trace norm close to~$1/\sqrt{2}$. Roughly speaking, one can think of $\mathcal F$ as identifying an interesting subspace of $S_1^d$ (namely, the image of $\mathcal F$) in which the unit ball looks somewhat like the intersection of the Euclidean ball 
of radius $\sqrt{2}$ with the $\ell_\infty$ ball $[-1,1]^n$ (since with such a unit ball, the norm of a vector $a$ is given by $\max\{\|a\|_2/\sqrt{2},\|a\|_\infty\}$).

A first attempt to construct an operator $\mathcal F$ as above might be to map each standard basis vector to a random unitary matrix. This, however, leads to a very poor map---while standard basis vectors are mapped to matrices of trace norm 1, vectors with no large coordinates are mapped to matrices of trace norm close to $8/(3\pi) \approx 0.848$ by Wigner's semicircle law.
Another natural approach is to look at the construction by
Haagerup and Itoh~\cite{Haagerup:1995} (see also~\cite[Section 11]{Pisier:2012} for a self-contained description) which shows the factor-2 lower bound 
in the non-commutative Grothendieck inequality, \ie, the tight integrality gap of the SDP~\eqref{eq:sdp-ncgt}.
Their construction relies on the so-called CAR algebra (after canonical anticommutation relations) and provides
an \emph{isometric} mapping from~$\C^n$ to~$S_1$, \ie, all unit vectors are mapped to matrices of trace norm $1$. 
Directly modifying this construction (akin to how, \eg, Khot et al.~\cite{KKMO} obtained tight hardness of {\sc MaxCut}
by restricting the tight integrality gap instances by Feige and Schechtman~\cite{FeigeS02} from the sphere to the hypercube) does not 
seem to work. 
Instead, our construction of~$\mathcal F$ relies on a different (yet related) algebra known as the Clifford algebra.
The Clifford algebra was used before in a celebrated result by Tsirelson~\cite{Tsirelson:85b} (to show that Grothendieck's inequality can be interpreted as a statement about XOR games with entanglement).
His result crucially relies on the fact that the Clifford algebra gives an \emph{isometric} mapping from~$\R^n$ to~$S_1$.
Notice that this is again an isometric embedding, but now only over the reals. 
Our main observation here (\expref{Lemma}{lem:Tmap}) is that the same mapping, when extended to $\C^n$, 
exhibits intriguing cancellations when the phases in the input vectors are not aligned, and this
leads to the construction of~$\mathcal F$ (\expref{Lemma}{lem:ourmap}).
Even though the proof of this fact is simple, we find it surprising; 
we are not aware of any previous application of such ``complex extensions'' of Clifford algebras.

\paragraph{Open questions.}
For the real and Hermitian cases there is a gap of~$\sqrt{2}$ between the guarantee of the~\cite{Naor:2013} algorithms and our hardness result.
It also would be interesting to explore whether hardness of approximation results can be derived to some of the applications of the NCG, including the Procrustes problem and robust Principle Component Analysis. We believe that our embedding would be useful there too.

\paragraph{Outline.}
The rest of the paper is organized as follows.
In \expref{Section}{sec:prelims}, we set some notational conventions, gather basic preliminary facts about relevant Banach spaces, and give a detailed formulation of the Smooth Label Cover problem.
In \expref{Section}{sec:FXsec}, we prove hardness of approximation for the problem of computing the norm of a general class of Banach-space-valued functions, closely following~\cite{GRSW}.
In \expref{Section}{sec:comm}, as a ``warm up,'' we prove 
\expref{Theorem}{thm:comm-hard} using the generic result of \expref{Section}{sec:FXsec} and straightforward applications of real and complex versions of the Berry-Ess\'een Theorem. 
\expref{Section}{sec:NC} contains our main technical contribution, which we use there to finish the proof of our main result (\expref{Theorem}{thm:lncg-ughard}).

\paragraph{Acknowledgements.} We thank Steve Heilman and Thomas Vidick for early discussions. We also thank Gilles Pisier and anonymous referees for helpful comments on an earlier version of this manuscript.

\section{Preliminaries}
\label{sec:prelims}

\paragraph{Notation and relevant Banach spaces.}
For a positive integer~$n$ we denote $[n] = \{1,\dots,n\}$.
For a graph~$G$ and vertices~$v,w\in V(G)$ we write~$v\sim w$ to denote that~$v$ and~$w$ are adjacent.
We write~$\Pr_{e\sim v}[\cdot]$ for the probability
with respect to a uniformly distributed random edge 
incident with~$v$.  %
For a finite set~$U$ we denote by~$\Exp_{u\in U}[\cdot]$ the expectation with respect to the uniform distribution over~$U$.
For a complex number~$c\in \C$, we denote its real and imaginary parts by~$\Re(c)$ and $\Im(c)$, respectively.
All Banach spaces are assumed to be finite-dimensional (so we can equivalently talk about normed spaces).
Recall that for Banach spaces~$X,Y$ the operator norm of a linear operator~$\mathcal F:X\to Y$ is given by
\beqn
\length{\mathcal F} = \sup_{x\in X\st \length{x}_X \leq 1}\length{\mathcal F(x)}_Y\,.
\eeqn
For a real number~$p\geq 1$, the~$p$-norm of a vector~$a\in\C^n$ is given by
\beqn
\length{a}_{\ell_p} = \left(\sum_{i=1}^n|a_i|^p\right)^{1/p}\,. 
\eeqn
As usual we implicitly endow~$\C^n$ with the Euclidean norm~$\length{a}_{\ell_2}$.
For a finite set~$U$ endowed with the uniform probability measure we denote by $L_p(U)$ the space of  functions $f:U\to\C$ with the norm
\beqn
\length{f}_{L_p(U)} = \left(\Exp_{u\in U}\big[|f(u)|^p\big]\right)^{1/p}\,.
\eeqn
More generally, for a Banach space~$X$ we denote by~$L_p(U, X)$ the space of functions $f:U\to X$ with the norm
\beqn
\length{f}_{L_p(U,X)} = \left(\Exp_{u\in U}\Big[ \length{f(u)}_X^p \Big]\right)^{1/p}\,.
\eeqn
We will write~$L_p(X)$ if~$U$ is not explicitly given and~$\length{f}_{L_p}$ instead of $\length{f}_{L_p(U,X)}$ when there is no danger of ambiguity.
Note that~$L_2(U, \C^n)$ is a Hilbert space.

\paragraph{Smooth Label Cover.}
An instance of \textnormal{Smooth Label Cover} is given by a quadruple~$(G, [n], [k], \Sigma)$ that consists of a regular connected (undirected) graph~$G = (V, E)$, a label set~$[n]$ for some positive integer~$n$, and a collection
\[
  \Sigma = \big((\pi_{ev}, \pi_{ew})\st e=(v,w)\in E\big)
\]
of pairs of maps both from~$[n]$ to~$[k]$ associated with the endpoints of the edges in~$E$.
Given an \emph{assignment} $A:V\to[n]$, we say that an edge~$e=(v,w)\in E$ is \emph{satisfied} if $\pi_{ev}\big(A(v)\big) = \pi_{ew}\big(A(w)\big)$.

The following hardness result for Smooth Label Cover, 
given in~\cite{GRSW},\footnote{For convenience, we make implicit some of
the parameters in the statement of the theorem.}
is a slight variant of the original
construction due to Khot~\cite{Khot02-color}. 
The theorem also describes the various structural properties, 
including smoothness, that are satisfied by the hard instances.

\begin{theorem}\label{thm:sml}
For any positive real numbers $\zeta, \gamma$ there exist positive integers ${n = n(\zeta, \gamma)}$, $k = k(\zeta,\gamma)$, and~$t = t(\zeta)$, and a
family of Smooth Label Cover instances $(G, [n], [k], \Sigma)$ as above 
such that %
\begin{itemize}
\item (Hardness)\ \ %
        It is \cclass{NP}-hard to distinguish between the following two cases:
	\begin{itemize} 
	\item(YES Case)\ \ %
          There is an assignment that satisfies all edges. 

	\item(NO Case)\ \ %
         Every assignment satisfies less than a $\zeta$-fraction of the edges.
	\end{itemize} 
\item (Structural properties)\ \ %
	\begin{itemize} 

	\item (Smoothness)\ \ %
For every vertex $v \in V$ and distinct~$i,j\in [n]$, we have 
\beq\label{eq:smoothness}
\Pr_{e\sim v} \left[\pi_{ev}(i) = \pi_{ev}(j)\right] \leq \gamma\,.
\eeq

	\item For every vertex $v\in V$, edge $e\in E$ incident on $v$, and $i \in [k]$, we have $|\pi_{ev}^{-1}(i)|\leq t $;
that is, at most $t$ elements in $[n]$ are mapped to
the same element in $[k]$.

	\item (Weak Expansion)\ \ %
For any $\delta > 0$ and vertex subset $V'\subseteq V$ such that
$|V'| = \delta\cdot |V|$, the number of edges between the vertices in $V'$
 is at least $(\delta^2/2) |E|$.  
	\end{itemize} 
\end{itemize}
\end{theorem}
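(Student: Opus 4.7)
The plan is to follow the construction given explicitly in~\cite{GRSW}, which itself is a minor modification of Khot's original smooth label cover from~\cite{Khot02-color}. The starting point is the NP-hardness of approximating a suitable constraint satisfaction problem such as Gap-3SAT-5, with perfect completeness. From this I would build the desired instance in three stages, and then verify that all the structural properties listed in the theorem hold simultaneously.

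First, I would invoke Khot's smoothening construction. Given a 3SAT instance with variable set $X$ and clause set $C$, define a bipartite label cover whose ``variable side'' vertices correspond to tuples in $X^{t}$ and whose ``clause side'' vertices correspond to tuples in $C^{t}$ (for an integer $t = t(\zeta)$ to be chosen). Labels on each vertex correspond to assignments to the associated variables (satisfying the relevant clauses on the clause side). Edges connect tuples that share variables in a prescribed pattern, and the projection maps simply read off the relevant coordinates. Two distinct labels $i,j \in [n]$ at a fixed vertex $v$ differ in at least one coordinate, so the probability over a random edge $e$ that $\pi_{ev}$ collapses them is at most~$1 - (1 - 1/t)^{O(1)} \le \gamma$ for $t$ large enough in terms of~$\gamma$, giving~\eqref{eq:smoothness}. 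The uniform bound $|\pi_{ev}^{-1}(i)| \le t$ on pre-image sizes is automatic since each projection essentially forgets a bounded number of coordinates.

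Second, I would apply parallel repetition (Raz's theorem, or the Dinur--Steurer / Holenstein improvements) to the label cover to drive the soundness below the desired constant~$\zeta$ while preserving completeness $1$. The smoothness and pre-image bound survive this step up to adjusting $t$ and rescaling $\gamma$, since projections in the repeated instance are just products of projections in the base instance.

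Third, and this is the step I expect to be the most delicate, I would enforce the weak expansion condition on the final undirected constraint graph $G$. This is where~\cite{GRSW} diverges from Khot's original bipartite construction. One clean way is to symmetrize the bipartite instance into an undirected graph by viewing both sides as one vertex set and, if necessary, composing with a constant-degree expander gadget on each vertex's neighborhood, so that the resulting graph is regular and has second eigenvalue bounded away from the first. For any such graph the expander mixing lemma yields that every vertex subset $V' \subseteq V$ with $|V'| = \delta |V|$ contains at least $(\delta^2/2)|E|$ internal edges, which is exactly the weak expansion property stated. The main obstacle is checking that this symmetrization/expander composition does not break either smoothness or the pre-image bound: one must specify the projections on the new edges carefully (e.g.\ by reusing $\pi_{ev}$ within each cloud and keeping the original projections across clouds) and then redo the counting argument for~\eqref{eq:smoothness} with respect to the new edge distribution. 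Given the previous two steps, this is a routine but somewhat tedious verification, and once carried out the full statement follows.
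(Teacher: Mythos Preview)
The paper does not prove Theorem~\ref{thm:sml}; it quotes it from~\cite{GRSW} (itself a mild variant of Khot~\cite{Khot02-color}) and uses it as a black box. So there is nothing to compare your argument to here, and a full proof would be out of place in this paper.

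That said, your sketch has two genuine soft spots relative to the actual construction in~\cite{GRSW,Khot02-color}. First, in Step~1 you identify the smoothening with plain $t$-fold parallel repetition of the clause--variable game, and then apply parallel repetition again in Step~2; these are not the same operation, and the probability estimate $1-(1-1/t)^{O(1)}\le\gamma$ does not describe how smoothness arises. In the real construction one \emph{first} repeats $r=r(\zeta)$ times to obtain soundness below~$\zeta$ (this is what fixes the preimage bound $t=t(\zeta)$, since each clause coordinate contributes a bounded factor), and \emph{then} pads each vertex with a large number $J=J(\gamma,r)$ of extra variable coordinates that a random incident edge leaves untouched; two distinct labels differ in some coordinate, and with probability at least $1-O(r/J)$ that coordinate is among the untouched ones, giving~\eqref{eq:smoothness}. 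Your write-up has the order reversed and conflates the parameter $t$ governing the preimage bound with the padding parameter governing smoothness, which is why your dependency $t=t(\zeta)$ cannot simultaneously make $\gamma$ small.

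Second, in Step~3 you propose grafting expander gadgets to obtain weak expansion. This is unnecessary and potentially harmful: the constraint graph coming out of parallel repetition already has a product structure in which a random edge has essentially independent endpoints, so any $\delta$-fraction of vertices automatically captures a $\delta^2/2$-fraction of edges. Adding gadgets would force you to redefine projections on the new edges and re-verify smoothness and the preimage bound, work that the actual construction avoids entirely.
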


\section{Hardness for general Banach-space valued operators}
\label{sec:FXsec}

The following proposition shows hardness of approximation for the problem of computing the norm of a linear map from~$\C^n$ to any Banach space that allows for a ``dictatorship test,'' namely, a linear function that maps the standard basis vectors to long vectors, and maps ``spread'' unit vectors to short vectors.
As stated, the proposition assumes the underlying field to be~$\C$; 
we note that the proposition holds with exactly the same proof also in the case of the real field $\R$.

\begin{theorem}\label{thm-NP-hard}
	Let $(X_n)_{n\in \mb{N}}$ be a family of
	finite-dimensional Banach spaces, and $\eta$ and $\tau$ be
	positive numbers such that $\eta > \tau$. Suppose that for
	each positive integer $n$ there exists a linear operator ${f :
	\mb{C}^n \to X_n}$ with the following properties:
	\begin{itemize}
		\item For any vector $a \in \mb{C}^n$, we have
			$\|f(a)\|_{X_n}\leq \|a\|_{\ell_2}$.
		\item For each standard basis vector $e_i$, we have
			$\|f(e_i)\|_{X_n}\geq \eta$.
		\item For any $\eps > 0$, there is a $\delta =
			\delta(\eps) > 0$ such that $\|f(a)\|_{X_n} >
			(\tau + \eps)\|a\|_{\ell_2}$ implies
			$\|a\|_{\ell_4} > \delta\|a\|_{\ell_2}$.
	\end{itemize}
	Then, for any $\eps' > 0$ there exists a positive integer $n$
	such that it is \cclass{NP}-hard to approximate the norm of an
	explicitly given linear operator $\mc{F}: L_2 \to
	L_1(X_n)$ to within a factor greater than $(\tau/\eta) +
	\eps'$. 
\end{theorem}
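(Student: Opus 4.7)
The plan is to reduce from NP-hard Smooth Label Cover (Theorem~\ref{thm:sml}), in the spirit of~\cite{GRSW} but with the abstract dictatorship operator $f$ replacing a hypercube-based test. Given $\eps'>0$, I would first pick $\eps>0$ so that $(\tau+\eps)/\eta<\tau/\eta+\eps'$, then fix $\delta=\delta(\eps)$ via hypothesis~(3), and finally choose Smooth Label Cover parameters $\zeta,\gamma>0$ sufficiently small in terms of $\delta,\eps,\eta,\tau,n$, invoking Theorem~\ref{thm:sml} to obtain a hard instance $(G=(V,E),[n],[k],\Sigma)$ whose inner label alphabet $[n]$ matches the dimension of the domain of~$f$.

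The explicit operator is $\mathcal{F}=f\circ T:L_2(V,\C^n)\to L_1(V,X_n)$, where $T:L_2(V,\C^n)\to L_2(V,\C^n)$ is the label-cover-aware averaging
\[
T(a)(v) \;=\; \Exp_{e=(v,w)\sim v}\bigl[\sigma_{e}\,a(w)\bigr],
\]
with $\sigma_e:\C^n\to\C^n$ sending each basis vector $e_j$ to the $\ell_2$-normalized indicator of $\pi_{ev}^{-1}(\pi_{ew}(j))$. The bounded-preimage bound $|\pi_{ev}^{-1}(\cdot)|\leq t$ makes each $\sigma_e$ a contraction and hence $T$ an $L_2$-contraction; combined with property~(1) of $f$, this yields $\|\mathcal{F}(a)\|_{L_1(V,X_n)}\leq\|a\|_{L_2}$. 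For the honest input $a(v)=e_{A(v)}$ from a satisfying assignment~$A$, smoothness~\eqref{eq:smoothness} plus a union bound over $i\in[n]\setminus\{A(v)\}$ imply that on all but at most a $(n-1)\gamma$-fraction of edges $e\sim v$ one has $\pi_{ev}^{-1}(\pi_{ew}(A(w)))=\{A(v)\}$ and hence $\sigma_e\,e_{A(w)}=e_{A(v)}$; thus $T(a)(v)$ is within $O(\sqrt{n\gamma})$ of $e_{A(v)}$ in $\ell_2$. Property~(2) and linearity of $f$ then give $\|\mathcal{F}(a)(v)\|_{X_n}\geq\eta-o_\gamma(1)$, and taking $\gamma$ small enough forces $\|\mathcal{F}\|>\eta-\eps$ in the YES case.

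For soundness I would argue the contrapositive: any unit $a\in L_2$ with $\|\mathcal{F}(a)\|_{L_1}>\tau+\eps$ admits, by Markov-style averaging, a constant fraction of vertices where $\|f(T(a)(v))\|_{X_n}>(\tau+\eps/2)\|T(a)(v)\|_{\ell_2}$. Hypothesis~(3) then forces $\|T(a)(v)\|_{\ell_4}\geq\delta\|T(a)(v)\|_{\ell_2}$, so $T(a)(v)$ concentrates on $O(\delta^{-4})$ heavy coordinates of $[n]$, from which I would extract a randomized labeling $\alpha:V\to[n]$ by sampling at~$v$ with probability proportional to $|T(a)(v)_i|^4$. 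The main obstacle will be showing that this decoding satisfies more than a $\zeta$-fraction of edges, contradicting the NO hardness. Smoothness~\eqref{eq:smoothness} is essential: for a random edge $e=(v,w)$, it makes $\pi_{ev}$ injective on the $O(\delta^{-4})$-sized decoded support at $v$ with probability $1-O(\gamma\delta^{-8})$, so each heavy coordinate of $T(a)(v)$ can typically be traced back through $\sigma_e$ to a unique label at a single neighbor $w$; weak expansion then lifts this local $\alpha(v)\leftrightarrow\alpha(w)$ correlation into a globally good labeling. Choosing $\zeta,\gamma$ small enough relative to $\delta,\eps$ closes the argument; the remaining calculations---Markov inequalities, union bounds over heavy supports, and verifying $T$'s contraction---are routine.
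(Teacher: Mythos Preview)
Your overall strategy---reduce from Smooth Label Cover, use $f$ as the dictatorship gadget, and decode a labeling from coordinates carrying large $\ell_4$ mass---matches the paper's. But the way you encode the Label Cover instance into $\mathcal{F}$ is different from the paper's, and your encoding has a real gap in completeness.

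The paper does \emph{not} compose $f$ with an edge-averaging operator $T$. Instead it restricts the domain to the linear subspace $\mathcal{H}\subseteq L_2(V,\C^n)$ of tuples $(a_v)_{v\in V}$ satisfying, for every edge $e=(v,w)$ and every $j\in[k]$, the constraint $\sum_{i\in\pi_{ev}^{-1}(j)}a_v(i)=\sum_{i\in\pi_{ew}^{-1}(j)}a_w(i)$, and simply sets $(\mathcal{F}\mathbf{a})(v)=f(a_v)$. Completeness is then exact and free: if $A$ satisfies all edges, the tuple $a_v=e_{A(v)}$ lies in $\mathcal{H}$ on the nose, giving $\|\mathcal{F}\|\geq\eta$ with no error term and \emph{no use of smoothness}. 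Smoothness enters only in soundness, and there it is applied to the decoded label sets $A_2^v$, whose size $O(t^2/(\eps\beta)^2)$ depends only on $\eps$ and $\zeta$---never on $n$.

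Your completeness, by contrast, needs $T(a)(v)$ to be within $O(\sqrt{n\gamma})$ of $e_{A(v)}$: you union-bound over all $i\in[n]\setminus\{A(v)\}$ to make $\pi_{ev}^{-1}(\pi_{ev}(A(v)))=\{A(v)\}$ for most $e\sim v$. But in Theorem~\ref{thm:sml} the alphabet size is $n=n(\zeta,\gamma)$, and in the underlying constructions $n$ grows (super-polynomially) as $\gamma\to 0$; there is no promise that $n\gamma\to 0$. Your sentence ``choose $\zeta,\gamma$ sufficiently small in terms of $\delta,\eps,\eta,\tau,n$'' is therefore circular, and ``taking $\gamma$ small enough forces $\|\mathcal{F}\|>\eta-\eps$'' does not follow. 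There is also a secondary error: your $\sigma_e$ is not a contraction. If $j\neq j'$ with $\pi_{ew}(j)=\pi_{ew}(j')$ then $\sigma_e e_j=\sigma_e e_{j'}$ is a unit vector, so $\|\sigma_e(e_j+e_{j'})\|_2=2>\sqrt{2}=\|e_j+e_{j'}\|_2$; the bounded-preimage property only yields $\|\sigma_e\|\leq\sqrt{t}$, which breaks the bound $\|\mathcal{F}(a)\|_{L_1}\leq\|a\|_{L_2}$ you need. Replacing the averaging operator by the constraint subspace $\mathcal{H}$ fixes both problems at once.
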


\subsection{The hardness reduction}

To set up the reduction, we begin by defining a linear operator~$\mathcal F = \mathcal F_{\zeta,\gamma}$ for any choice of positive real numbers~$\zeta,\gamma$.
Afterwards we show that there is a choice of these parameters giving the desired result.
For positive real numbers $\zeta, \gamma$, let~$n$, $k$, and~$t$ be positive integers (depending on~$\zeta, \gamma)$ and~$(G, [n], [k], \Sigma)$ a Smooth Label Cover instance as in \expref{Theorem}{thm:sml}, where~$G = (V,E)$ is a regular graph.
Note that~$\zeta$ controls the ``satisfiability'' of the instance in the NO case, that~$\gamma$ controls the ``smoothness,'' and that~$t$ depends on~$\zeta$ only.
Endow the vertex set~$V$ with the uniform probability measure.
To define~$\mathcal F$ we consider a special linear subspace~$\mathcal H$ of the Hilbert space~$L_2(V, \C^n)$.
It will be helpful to view a vector~${\bf a}\in L_2(V, \C^n)$ as an assignment ${\bf a} = (a_v)_{v\in V}$ of vectors~$a_v\in \C^n$ to~$V$.
Let~$\mathcal H\subseteq L_2(V, \C^n)$ be the subspace of vectors~${\bf a} = (a_v)_{v\in V}$ that satisfy for every $e = (v,w) \in E$ and $j \in [k]$ the
homogeneous linear constraint
\begin{equation} 
	 \sum_{i \in \pi_{ev}^{-1}(j)}a_{v}(i) = 
	\sum_{i \in \pi_{ew}^{-1}(j)}a_{w}(i)\,, \label{eqn-Q-def}
\end{equation}
where~$a_v(i)$ denotes the~$i$th coordinate of the vector~$a_v$.
Notice that if an assignment~$A:V\to[n]$ satisfies the edge~$e = (v,w)$, then the standard basis vectors~$a_v = e_{A(v)}$ and $a_w = e_{A(w)}$ satisfy~\eqref{eqn-Q-def}; 
indeed, if~$\pi_{ev}\big(A(v)\big) = \pi_{ew}\big(A(w)\big) = j'$ then both sides of~\eqref{eqn-Q-def} equal~1 if~$j = j'$ and equal zero otherwise.

Now let $\eta$, $\tau$ and $f$ be as in \expref{Theorem}{thm-NP-hard}.
We associate with the Smooth Label Cover instance $(G, [n], [k], \Sigma)$ from above the linear operator
$\mc{F} : \mc{H} \to L_1(V, X_n)$ given by
\begin{equation}
	\left(\mc{F}(\mv{a})\right)(v) = f(a_v)\,. \label{eqn-F-def}
\end{equation}
The operator~$\mathcal F$ thus maps a $\C^n$-valued assignment ${\bf a} = (a_v)_{v\in V}$ satisfying~\eqref{eqn-Q-def} to an~$X_n$-valued assignment given by~$f(a_v)$ for each~$v\in V$.
\expref{Theorem}{thm-NP-hard} follows from the following two lemmas, which we prove in Sections~\ref{sec:completeness} and~\ref{sec:soundness}, respectively.

\begin{lemma}[Completeness]\label{lem:completeness}
Suppose that there exists an assignment $A: V\to [n]$ that satisfies all the edges in~$E$.
Then, $\length{\mathcal F} \geq \eta$.
\end{lemma}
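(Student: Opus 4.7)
The plan is to exhibit a single explicit vector $\mathbf{a} \in \mathcal{H}$ whose norm is at most $1$ and whose image under $\mathcal{F}$ has $L_1$-norm at least $\eta$; this immediately yields $\|\mathcal{F}\| \geq \eta$ by the definition of the operator norm.

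Given a satisfying assignment $A:V\to[n]$, the natural candidate is the ``dictator'' assignment ${\bf a} = (a_v)_{v\in V}$ with $a_v = e_{A(v)}$, the standard basis vector of $\mathbb{C}^n$ indexed by $A(v)$. The first step is to verify that ${\bf a}\in \mathcal{H}$, which is exactly the observation already made right after \eqref{eqn-Q-def}: since $A$ satisfies every edge $e=(v,w)$, the common value $j' = \pi_{ev}(A(v)) = \pi_{ew}(A(w))$ makes both sums in \eqref{eqn-Q-def} equal to~$1$ when $j=j'$ and to~$0$ otherwise. So the linear constraints defining $\mathcal{H}$ are satisfied.

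The second step is a two-line norm computation. Since each $a_v$ is a unit vector in $\mathbb{C}^n$,
\[
\|{\bf a}\|_{L_2(V,\mathbb{C}^n)}^2 \;=\; \Exp_{v\in V}\bigl[\|a_v\|_{\ell_2}^2\bigr] \;=\; 1,
\]
so ${\bf a}$ has unit $L_2$-norm. On the other hand, using the lower bound $\|f(e_i)\|_{X_n} \geq \eta$ for every standard basis vector from the hypothesis of Theorem~\ref{thm-NP-hard}, and the definition \eqref{eqn-F-def} of $\mathcal{F}$,
\[
\|\mathcal{F}({\bf a})\|_{L_1(V,X_n)} \;=\; \Exp_{v\in V}\bigl[\|f(e_{A(v)})\|_{X_n}\bigr] \;\geq\; \eta.
\]
Combining the two displays gives $\|\mathcal{F}\| \geq \|\mathcal{F}({\bf a})\|_{L_1}/\|{\bf a}\|_{L_2} \geq \eta$, as required.

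There is essentially no obstacle here; the lemma is a direct sanity check that the dictator vectors built from a satisfying assignment lie in $\mathcal{H}$ and that the per-vertex lower bound on $\|f(e_i)\|_{X_n}$ transfers verbatim through the averaging defining the $L_1(V,X_n)$ norm. All of the real work of the reduction is deferred to the soundness lemma, where the third hypothesis on $f$ (the spread vectors have short image) together with the smoothness of the Label Cover instance must be invoked.
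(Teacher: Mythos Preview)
Your proof is correct and follows essentially the same approach as the paper: define $\mathbf{a}$ by $a_v = e_{A(v)}$, observe that $\mathbf{a}\in\mathcal{H}$ because $A$ satisfies every edge, note that $\|\mathbf{a}\|_{L_2}=1$, and use the second hypothesis on $f$ to get $\|\mathcal{F}(\mathbf{a})\|_{L_1}\ge\eta$. The only difference is that you spell out the intermediate ratio $\|\mathcal{F}(\mathbf{a})\|_{L_1}/\|\mathbf{a}\|_{L_2}$ explicitly, which is harmless.
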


\begin{lemma}[Soundness]\label{lem:soundness}
For any~$\eps >0$ there exists a choice of~$\zeta,\gamma >0$ such that if ${\length{\mathcal F} > \tau + 4\eps}$ then there exists an assignment that satisfies at least a~$\zeta$-fraction of the edges of~$G$. 
\end{lemma}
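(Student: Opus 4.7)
The plan is to prove the contrapositive: starting from a witness $\mathbf{a}=(a_v)_{v\in V}\in\mathcal H$ normalized so that $\|\mathbf{a}\|_{L_2}^2=\Exp_v\|a_v\|_{\ell_2}^2=1$ and $\|\mathcal F(\mathbf{a})\|_{L_1}=\Exp_v\|f(a_v)\|_{X_n}>\tau+4\eps$, I will construct a randomized labelling $A:V\to[n]$ satisfying, in expectation, a fraction of edges depending only on $\eps$; choosing $\zeta$ smaller than this fraction and $\gamma$ small enough in terms of $\eps$, $\delta$, and $n=n(\zeta,\gamma)$ then contradicts the NO case.

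\textbf{Step 1 (heavy lists from the dictatorship hypothesis).} Let $T:=\{v:\|f(a_v)\|_{X_n}>(\tau+\eps)\|a_v\|_{\ell_2}\}$. Using $\|f(a)\|_{X_n}\le\|a\|_{\ell_2}$ pointwise and splitting the expectation across $T$ and $V\setminus T$ yields $\Exp_v[\mathbf 1_T\|a_v\|_{\ell_2}]>3\eps$, whence $|T|/|V|\ge 9\eps^2$ by Cauchy--Schwarz. For each $v\in T$ the third hypothesis on $f$ gives $\|a_v\|_{\ell_4}>\delta\|a_v\|_{\ell_2}$ with $\delta=\delta(\eps)>0$. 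Setting $L_v:=\{i\in[n]:|a_v(i)|^2\ge(\delta^4/2)\|a_v\|_{\ell_2}^2\}$, a routine $\ell_4$/$\ell_2$ split shows $|L_v|\le 2\delta^{-4}$ and $\sum_{i\in L_v}|a_v(i)|^2\ge(\delta^4/2)\|a_v\|_{\ell_2}^2$.

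\textbf{Step 2 (randomized rounding).} Sample $A(v)=i$ independently with probability $|a_v(i)|^2/\|a_v\|_{\ell_2}^2$. For $v\in T$, $\Pr[A(v)\in L_v]\ge\delta^4/2$, and on that event $|a_v(A(v))|\ge(\delta^2/\sqrt 2)\|a_v\|_{\ell_2}$.

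\textbf{Step 3 (matching via \eqref{eqn-Q-def} and smoothness).} Write \eqref{eqn-Q-def} as the $\C^k$ identity $\tilde a_{v,e}=\tilde a_{w,e}$, where $\tilde a_{v,e}(j):=\sum_{i\in\pi_{ev}^{-1}(j)}a_v(i)$. Applying \eqref{eq:smoothness} to all $\binom{|L_v|}{2}\le 2\delta^{-8}$ pairs in $L_v$ with a union bound, $\pi_{ev}$ is injective on $L_v$ for a $(1-2\gamma\delta^{-8})$-fraction of edges $e\sim v$. Weak expansion applied to $T$ (of cardinality fraction $\ge 9\eps^2$) places an $\Omega(\eps^4)$-fraction of edges entirely inside $T$, and for almost all such edges double-sided injectivity on $L_v\cup L_w$ holds once $\gamma\ll\eps^4\delta^8$. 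For such an edge and $A(v)=i\in L_v$, reading the $j=\pi_{ev}(i)$-coordinate of $\tilde a_{v,e}=\tilde a_{w,e}$ gives $a_v(i)+\tilde c_{v,e}(j)=\tilde a_{w,e}(j)$, where $c_v:=a_v-a_v\mathbf 1_{L_v}$; the heavy magnitude $|a_v(i)|\ge(\delta^2/\sqrt 2)\|a_v\|_{\ell_2}$ forces either (a)~a heavy $i'\in L_w\cap\pi_{ew}^{-1}(j)$ to appear on the right (so $A$ satisfies $e$ with probability at least $\delta^8/4$ on the joint event $\{A(v)=i,\,A(w)=i'\}$), or (b)~the $w$-leakage $\tilde c_{w,e}(j)$ absorbs the heavy contribution. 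Scenario (b) is ruled out in aggregate by averaging the smoothness bound $\Exp_{e\sim w}\|\tilde c_{w,e}\|_{\ell_2}^2\le\|c_w\|_{\ell_2}^2+\gamma\|c_w\|_{\ell_1}^2$ across edges and invoking Markov's inequality, yielding that $A$ satisfies an $\Omega(\eps^4\delta^{O(1)})$-fraction of edges in expectation.

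\textbf{Main obstacle.} The delicate point is bounding the leakage $\tilde c_{v,e}$: a single preimage $\pi_{ev}^{-1}(j)$ may have up to $t=t(\zeta)$ elements, so a triangle-inequality bound is hopelessly weak, and the smoothness-based $\ell_2$-bound carries a correction $\gamma\|c_v\|_{\ell_1}^2\le\gamma n\|c_v\|_{\ell_2}^2$ depending on $n=n(\zeta,\gamma)$. The argument closes by selecting parameters in the order $\eps\to\delta\to\zeta\to\gamma$, with $\gamma$ picked last, small enough that $\gamma n(\zeta,\gamma)\ll 1$, which is the standard maneuver in smooth-label-cover reductions.
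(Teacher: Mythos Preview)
Your Steps~1 and~2 are sound and the skeleton matches the paper's argument; the gap is in Step~3, specifically the leakage control. The bound $\Exp_{e\sim w}\|\tilde c_{w,e}\|_{\ell_2}^2\le\|c_w\|_{\ell_2}^2+\gamma\|c_w\|_{\ell_1}^2$ is correct but too weak. Even if $\gamma n\ll 1$, the dominant term $\|c_w\|_{\ell_2}^2$ remains, and it is typically comparable to $\|a_w\|_{\ell_2}^2$ (your heavy list $L_w$ need only carry a $\delta^4/2$-fraction of the $\ell_2$-mass). Markov over edges therefore only yields $|\tilde c_{w,e}(j^*)|\le\|\tilde c_{w,e}\|_{\ell_2}\lesssim\|a_w\|_{\ell_2}$, which must be compared to the heavy coordinate $|a_v(i)|\ge(\delta^2/\sqrt 2)\|a_v\|_{\ell_2}$; since your set $T$ carries no upper bound on $\|a_v\|_{\ell_2}$, there is no relation between $\|a_w\|_{\ell_2}$ and $\delta^2\|a_v\|_{\ell_2}$. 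The same failure occurs on the $v$-side, which you do not address: the light part of $\pi_{ev}^{-1}(j^*)$ may contain up to $t$ indices each just below threshold, so $|\tilde c_{v,e}(j^*)|$ can be of order $t\cdot(\delta^2/\sqrt 2)\|a_v\|_{\ell_2}$, swamping the heavy term. Finally, the assertion that one can pick $\gamma$ after $\zeta$ so that $\gamma\, n(\zeta,\gamma)\ll 1$ is circular: $n$ grows as $\gamma$ shrinks, and Theorem~\ref{thm:sml} makes no such promise. This is not a standard maneuver.

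The paper circumvents all of this by exploiting the second structural property of Theorem~\ref{thm:sml}, the preimage bound $|\pi_{ev}^{-1}(j)|\le t$ with $t=t(\zeta)$ \emph{independent of $\gamma$}. It first restricts to vertices with $\|b_v\|_{\ell_2}\le 1/\eps$ (putting all heavy coordinates on the common absolute scale $\beta=\delta^2\eps^3$), and then uses a \emph{two-level} threshold: $A_1^v=\{i:|b_v(i)|\ge\beta/4\}$ and the larger $A_2^v=\{i:|b_v(i)|\ge\beta/(4t)\}$. Smoothness is applied to $A_2^v$, whose size is bounded in terms of $\eps$ and $t$ only, so $\gamma$ can indeed be chosen after $\zeta$ without circularity. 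Once $\pi_{ev}$ is injective on $A_2^v$, every index in $\pi_{ev}^{-1}(j^*)$ other than the heavy one lies outside $A_2^v$ and hence has magnitude $<\beta/(4t)$; with at most $t$ of them, the leakage is $<\beta/4$ pointwise, not merely in $\ell_2$-average. This two-scale trick using $t$ is the missing ingredient.
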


\begin{proof}[Proof of \expref{Theorem}{thm-NP-hard}]
Let $\eps>0$ be arbitrary, let $\zeta,\gamma$ be as in \expref{Lemma}{lem:soundness}, and let $n=n(\zeta,\gamma)$ and~$k = k(\zeta,\gamma)$ be as in \expref{Theorem}{thm:sml}.  
We use the reduction described above, which maps a Smooth Label Cover instance $(G, [n], [k], \Sigma)$
to the linear operator $\mathcal F:L_2\to L_1(X_n)$ specified in~\eqref{eqn-F-def}.
By \expref{Lemma}{lem:completeness}, YES instances are mapped to ${\mathcal F}$ satisfying $\length{\mathcal F}\geq \eta$, whereas by \expref{Lemma}{lem:soundness}, NO instances are mapped
to $\mathcal F$ satisfying $\length{\mathcal F} < \tau + 4\eps$.
We therefore obtain hardness of approximation to within a factor $(\tau + 4\eps)/\eta$. Since $\eps$ is arbitrary, we are done. 
\end{proof}

\subsection{Completeness}
\label{sec:completeness}

Here we prove \expref{Lemma}{lem:completeness}.

\begin{proof}[Proof of \expref{Lemma}{lem:completeness}]
Let $A : V \to [n]$ be an assignment that satisfies all the edges.
Consider the vector $\mv{a} \in L_2(V, \C^n)$ where $a_v =
e_{A(v)}$ and notice that ${\|\mv{a}\|_{L_2} = 1}$.
Since~$A$ satisfies all edges, $\mv{a}$ satisfies the constraint~\eqref{eqn-Q-def} for every $e \in E$ 
and $j \in [n]$, and thus $\mv{a}$ lies in the domain~$\mc{H}$ of~$\mathcal F$.
Moreover, by the second property of~$f$ given in \expref{Theorem}{thm-NP-hard},
\beqn
\|\mc{F}(\mv{a})\|_{L_1(V,X_n)}  = \Exp_{v\in V}\big[\length{f(a_v)}_{X_n}\big] \geq \eta.
\eeqn
Hence, $\|\mc{F}\| \geq \eta$.
\end{proof}

\subsection{Soundness}
\label{sec:soundness}

Here we prove \expref{Lemma}{lem:soundness}
and show that among the family of operators~$\mathcal F = \mathcal F_{\zeta,\gamma}$ as in~\eqref{eqn-F-def}, for any~$\eps>0$ there is a choice of~$\zeta,\gamma>0$ such that if $\length{\mathcal F} > \tau + 4\eps$, then there exists an assignment satisfying a~$\zeta$-fraction of the edges in the Smooth Label Cover instance associated with~$\mathcal F$.
To begin, assume that
$	\|\mc{F}\| > \tau + 4\eps$
for some ${\eps> 0}$.
Let $\mv{b} \in
\mc{H}$ be a vector such that $\|\mv{b}\|_{L_2} = 1$ and
\begin{equation}
	\Exp_{v\in V}\big[\length{f(b_v)}\big]
	=
	\length{\mathcal F({\bf b})}_{L_1}
	\geq
	\tau + 4\eps\,.\label{eq:Ftaueps}
\end{equation}
The weak expansion property in \expref{Theorem}{thm:sml} implies that it suffices to find a ``good'' assignment for a large subset of the vertices, as any large set of vertices will induce a large set of edges.
For~$\delta = \delta(\eps)$ as in \expref{Theorem}{thm-NP-hard}, we will consider set of vertices
\begin{equation}
	V_0 = \{v \in V\, \mid\, \|b_v\|_{\ell_4} > \delta\eps\:\:\tn{
	and }\:\:\|b_v\|_{\ell_2} \leq 1/\eps\}\,.\label{eqn-vzerodef}
\end{equation}

The following lemma shows that $V_0$ contains a significant fraction of vertices.

\begin{lemma}\label{lem-Vzero}
	For~$V_0\subseteq V$ defined as in~\eqref{eqn-vzerodef}, we have $|V_0| \geq \eps^2|V|$.
\end{lemma}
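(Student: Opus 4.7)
The plan is to write $\tau + 4\eps \leq \Exp_v[\|f(b_v)\|_{X_n}]$ as a sum of three contributions, coming from $V_0$, from a set $W_1$ of vertices where $\|b_v\|_{\ell_2}$ is too large, and from the remainder $W_2 = V \setminus (V_0 \cup W_1)$. On $V_0$ I will use only the trivial bound $\|f(b_v)\|_{X_n} \leq \|b_v\|_{\ell_2} \leq 1/\eps$, on $W_1$ a Markov-type estimate from $\|\mv b\|_{L_2} = 1$, and on $W_2$ the contrapositive of the third property of $f$. Rearranging will give $(1/\eps)\,|V_0|/|V| \geq \eps$.

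In more detail, set $W_1 = \{v \in V \st \|b_v\|_{\ell_2} > 1/\eps\}$. Since on $W_1$ we have $\|b_v\|_{\ell_2} \leq \eps\,\|b_v\|_{\ell_2}^2$, the first property of $f$ and $\Exp_v[\|b_v\|_{\ell_2}^2] = \|\mv b\|_{L_2}^2 = 1$ yield $\Exp_v\bigl[\|f(b_v)\|_{X_n}\mathbf 1_{W_1}\bigr] \leq \eps$. For $W_2 = \{v \st \|b_v\|_{\ell_4} \leq \delta\eps,\; \|b_v\|_{\ell_2} \leq 1/\eps\}$, I further split into $W_2^a = \{v \in W_2 \st \|b_v\|_{\ell_4} \leq \delta\,\|b_v\|_{\ell_2}\}$ and $W_2^b = W_2 \setminus W_2^a$. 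On $W_2^a$, the contrapositive of the third property of $f$ gives $\|f(b_v)\|_{X_n} \leq (\tau + \eps)\,\|b_v\|_{\ell_2}$, and Cauchy--Schwarz together with $\|\mv b\|_{L_2} = 1$ bounds $\Exp_v[\|b_v\|_{\ell_2}\mathbf 1_{W_2^a}] \leq 1$, so the contribution is at most $\tau + \eps$. On $W_2^b$ the defining inequalities force $\|b_v\|_{\ell_2} < \|b_v\|_{\ell_4}/\delta \leq \eps$, and the first property of $f$ gives contribution at most $\eps$.

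Finally, on $V_0$ the first property of $f$ and the bound $\|b_v\|_{\ell_2} \leq 1/\eps$ give $\Exp_v[\|f(b_v)\|_{X_n}\mathbf 1_{V_0}] \leq (1/\eps)\,|V_0|/|V|$. Summing the four pieces and comparing with~\eqref{eq:Ftaueps},
\[
\tau + 4\eps \;\leq\; \frac{1}{\eps}\cdot\frac{|V_0|}{|V|} \;+\; \eps \;+\; (\tau + \eps) \;+\; \eps,
\]
which rearranges to $|V_0|/|V| \geq \eps^2$. No real obstacle stands in the way; the only thing one needs to be careful about is that the third property of $f$ requires a statement about the ratio $\|b_v\|_{\ell_4}/\|b_v\|_{\ell_2}$ rather than about $\|b_v\|_{\ell_4}$ alone, which is exactly why the case split $W_2 = W_2^a \cup W_2^b$ is necessary.
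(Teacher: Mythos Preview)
Your proof is correct and follows essentially the same approach as the paper: a four-way decomposition of $V$ with the same bounds on each piece ($1/\eps$ on $V_0$, the Markov-type estimate on the large-$\ell_2$ set, the third property of $f$ plus Cauchy--Schwarz on one part, and the trivial $\|b_v\|_{\ell_2}<\eps$ bound on the other). The only cosmetic difference is that the paper splits the small-$\ell_4$ vertices by the threshold $\|b_v\|_{\ell_2}\lessgtr\eps$ (which, given $\|b_v\|_{\ell_4}\le\delta\eps$, is equivalent to your ratio condition), and allows its four sets to overlap, whereas you use a genuine partition and phrase the split directly via $\|b_v\|_{\ell_4}\le\delta\|b_v\|_{\ell_2}$.
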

\begin{proof}

Define the sets
\begin{align*}
	V_1 & =  \{v \in V\,\mid\, \|b_v\|_{\ell_4} \leq
\delta\eps\:\:\tn{ and }\:\: \|b_v\|_{\ell_2} < \eps\}\,, \\
	V_2 & =  \{v \in V\,\mid\, \|b_v\|_{\ell_4} \leq
\delta\eps\:\:\tn{ and }\:\: \|b_v\|_{\ell_2} \geq \eps\}\,, \\
	V_3 & =  \{v \in V\,\mid\, \|b_v\|_{\ell_2} > 1/\eps\}\,.
\end{align*}
	From~\eqref{eq:Ftaueps}, we have
	\begin{equation}
		\sum_{v\in V_0}\|f(b_v)\|_{X_n} + 
		\sum_{v\in V_1}\|f(b_v)\|_{X_n} +
		\sum_{v\in V_2}\|f(b_v)\|_{X_n} +
		\sum_{v\in V_3}\|f(b_v)\|_{X_n}
		\geq (\tau + 4\eps)|V|\,. \label{eqn-3termsum}
	\end{equation}
	We bound the four sums on the left-hand side of~\eqref{eqn-3termsum} individually.
	Since (by the first item in \expref{Theorem}{thm-NP-hard}) we have $\|f(b_v)\|_{X_n} \leq \|b_v\|_{\ell_2}$, and since $\|b_v\|_{\ell_2} \leq 1/\eps$ for every $v \in V_0$,
	the first sum in~\eqref{eqn-3termsum} can be bounded by
	\begin{equation}
		\sum_{v\in V_0}\|f(b_v)\|_{X_n} \leq |V_0|/\eps\,.
	\end{equation}
	Similarly, using the definition of $V_1$ the second
	sum in~\eqref{eqn-3termsum} is at most~$\eps|V|$. 
	Next, from the third property of $f$ in \expref{Theorem}{thm-NP-hard}, for each $v \in V_2$, we have $\|f(b_v)\|_{X_n} \leq (\tau + \eps)\|b_v\|_{\ell_2}$. 
	Therefore, the third sum in~\eqref{eqn-3termsum} is bounded as
	\begin{align}
		\sum_{v\in V_2}\|f(b_v)\|_{X_n} &\leq 
		(\tau + \eps)\sum_{v\in V_2}\|b_v\|_{\ell_2}
		\nonumber &&\\
		&\leq (\tau + \eps){|V_2|}^{\frac{1}{2}}
		\Big(\sum_{v\in
		V_2}\|b_v\|_{\ell_2}^2\Big)^{\frac{1}{2}} &&
		\tn{(by Cauchy-Schwartz)} \nonumber \\
		&\leq (\tau + \eps){|V|}^{\frac{1}{2}}
		\Big(\sum_{v\in
		V}\|b_v\|_{\ell_2}^2\Big)^{\frac{1}{2}} \nonumber &&\\
		&=  (\tau + \eps)|V|\,, &&
	\end{align}
	where the last inequality uses $\|\mv{b}\|_{L_2} = 1$.
	Finally, the fourth sum in~\eqref{eqn-3termsum} is bounded by
	\begin{align}
		\sum_{v\in V_3}\|f(b_v)\|_{X_n}&\leq
		\sum_{v\in V_3} \|b_v\|_{\ell_2} &&
		\tn{(by
		the property of }f\tn{)} \nonumber \\
		&< \sum_{v\in V_3} \eps\|b_v\|^2_{\ell_2}
		 \nonumber && \\
		&\leq \eps\sum_{v \in V} \|b_v\|_{\ell_2}^2  \nonumber &&\\
		&= \eps |V|\|\mv{b}\|_{L_2}^2 = \eps|V|\,. &&
	\end{align}
	Combining the above with Equation \eqref{eqn-3termsum} yields,
	$|V_0|/\eps \geq \eps|V|$,
	which proves \expref{Lemma}{lem-Vzero}.
\end{proof}
\expref{Lemma}{lem-Vzero} and the weak expansion property implies that the set~$E(V_0)$ of edges induced by~$V_0$ has cardinality
\begin{equation}
	|E(V_0)| \geq (\eps^4/2)|E|\,. \label{eqn-dense-edges}
\end{equation}

We set out to show that there exists an assignment to the vertices in
$V_0$ that satisfies a significant fraction of edges in $E(V_0)$.
Roughly speaking, we do this by randomly assigning each ${v\in V_0}$ one of the coordinates of the vector~$b_v$ at which it has large magnitude. (Assigning the largest coordinate may not work.)
The following simple proposition shows that those vectors indeed have large coordinates.

\begin{proposition}\label{prop-max-element}
	Let~$\beta = \beta(\eps)$ be given by~$\beta = \delta^2\eps^3$.
	Then, for each $v \in V_0$, we have $\length{b_v}_{\ell_\infty} \geq \beta$.
\end{proposition}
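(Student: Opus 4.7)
The plan is to deduce this from the standard interpolation inequality relating the $\ell_\infty$, $\ell_4$, and $\ell_2$ norms. Specifically, for any vector $a \in \C^n$ one has
\[
\|a\|_{\ell_4}^4 \;=\; \sum_i |a_i|^4 \;\leq\; \|a\|_{\ell_\infty}^2 \sum_i |a_i|^2 \;=\; \|a\|_{\ell_\infty}^2 \, \|a\|_{\ell_2}^2,
\]
so $\|a\|_{\ell_\infty} \geq \|a\|_{\ell_4}^2 / \|a\|_{\ell_2}$ whenever $a \neq 0$.

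Applying this to $a = b_v$ for $v \in V_0$ and using the two defining inequalities of $V_0$ in~\eqref{eqn-vzerodef}, namely $\|b_v\|_{\ell_4} > \delta\eps$ and $\|b_v\|_{\ell_2} \leq 1/\eps$, I get
\[
\|b_v\|_{\ell_\infty} \;\geq\; \frac{\|b_v\|_{\ell_4}^2}{\|b_v\|_{\ell_2}} \;>\; \frac{(\delta\eps)^2}{1/\eps} \;=\; \delta^2\eps^3 \;=\; \beta,
\]
which is the desired bound.

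This is completely routine; there is no real obstacle. The only thing to check is that the interpolation inequality $\|a\|_{\ell_4}^2 \leq \|a\|_{\ell_\infty}\|a\|_{\ell_2}$ is a one-line consequence of pulling out $|a_i|^2 \leq \|a\|_{\ell_\infty}^2$ inside the sum defining $\|a\|_{\ell_4}^4$, and that the two bounds on $\|b_v\|_{\ell_4}$ and $\|b_v\|_{\ell_2}$ enter with the correct sign (the $\ell_4$ bound on the numerator and the $\ell_2$ bound on the denominator), which is exactly how $V_0$ was set up in the previous lemma.
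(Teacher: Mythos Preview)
Your proof is correct and essentially identical to the paper's: both use the interpolation inequality $\|a\|_{\ell_4}^4 \leq \|a\|_{\ell_\infty}^2\|a\|_{\ell_2}^2$ and plug in the two defining bounds of $V_0$ to obtain $\|b_v\|_{\ell_\infty} \geq \delta^2\eps^3$.
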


\begin{proof}
	For every~$v\in V_0$, we have
	\begin{equation}
		\delta^4\eps^4 \leq \|b_v\|_{\ell_4}^4 \leq \length{b_v}_{\ell_\infty}^2
		\|b_v\|_{\ell_2}^2 \leq \length{b_v}_{\ell_\infty}^2/\eps^2\,,
	\end{equation}
	giving the claim.
\end{proof}

For the to-be-determined value of~$\zeta$ let~$t = t(\zeta)$ be as in \expref{Theorem}{thm:sml} and
for each $v \in V_0$ let

\beqn
	A^v_1 = \Big\{i \in [n]\, \Big|\, |b_v(i)| \geq \frac{\beta}{4}\Big\}
	\quad\quad
	\text{and}
	\quad\quad
	A^v_2 = \Big\{i \in [n]\, \Big|\, |b_v(i)| \geq \frac{\beta}{4t}\Big\}\,.
\eeqn

By \expref{Proposition}{prop-max-element} these sets are nonempty and clearly  $A^v_1\subseteq A^v_2$.
Moreover, since $\|b_v\|_{\ell_2} \leq 1/\eps$, we have,
\begin{equation}
	|A^v_1| \leq \frac{16}{\eps^2\beta^2}
	\quad\quad 
	\tn{and}
	\quad\quad 
	|A^v_2| \leq \frac{16t^2}{\eps^2\beta^2}\,. \label{eqn-sizes}
\end{equation}

Now consider a random assignment~$A:V_0\to[n]$ that independently assigns each vertex $v \in V_0$ a uniformly random label from $A^v_1$ and assigns the remaining vertices in~$V$ some fixed arbitrary label. The following
lemma shows that on average, this assignment satisfies a significant fraction of
edges.

\begin{lemma}\label{lem-good-labeling}
	There exists a~$\gamma > 0$
	depending only on $\eps$
	and $\zeta$ such that for some absolute constant~$c>0$
	the expected fraction of edges in $E$ satisfied by the random assignment~$A$ given above is at least~$c\eps^8\beta^4$.
\end{lemma}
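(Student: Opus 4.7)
The plan is to combine the smoothness property~\eqref{eq:smoothness} of the Label Cover instance with the linear constraint~\eqref{eqn-Q-def} defining $\mc{H}$. The goal is to show that for a constant-fraction-in-$\eps$ set of edges $e=(v,w)\in E(V_0)$, there is a matching pair $(i_v,i_w)\in A_1^v\times A_1^w$ with $\pi_{ev}(i_v)=\pi_{ew}(i_w)$. Once such a pair exists, the random assignment satisfies $e$ with probability at least $1/(|A_1^v|\,|A_1^w|)\geq \eps^4\beta^4/256$ by~\eqref{eqn-sizes}, yielding the claimed $c\eps^8\beta^4$ bound after multiplying by $|E'|/|E|\gtrsim\eps^4$.

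First I would choose $\gamma=\gamma(\eps,\zeta)$ small enough that, for each $v\in V_0$, smoothness together with a union bound over the $\binom{|A_2^v|}{2}$ pairs of labels in $A_2^v$ (whose size is at most $16t^2/(\eps^2\beta^2)$ by~\eqref{eqn-sizes}) yields $\Pr_{e\sim v}[\pi_{ev}|_{A_2^v}\text{ not injective}]\le \eps^4/16$. This is where the dependence of $\gamma$ on $\zeta$ enters, via $t=t(\zeta)$. By regularity of $G$, Markov's inequality, and $|E(V_0)|\ge (\eps^4/2)|E|$ from~\eqref{eqn-dense-edges}, the set $E'\subseteq E(V_0)$ of edges at which both $\pi_{ev}|_{A_2^v}$ and $\pi_{ew}|_{A_2^w}$ are injective satisfies $|E'|\ge (\eps^4/4)|E|$.

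Next comes the propagation step. For $e=(v,w)\in E'$ and $i_v\in A_1^v$, set $j=\pi_{ev}(i_v)$. Injectivity on $A_2^v$ combined with $i_v\in A_1^v\subseteq A_2^v$ forces every other $i'\in \pi_{ev}^{-1}(j)$ to lie outside $A_2^v$, so $|b_v(i')|<\beta/(4t)$. The constraint~\eqref{eqn-Q-def} and the triangle inequality then give
\[\Big|\sum_{i''\in\pi_{ew}^{-1}(j)} b_w(i'')\Big|\ge \tfrac{\beta}{4}-(t-1)\tfrac{\beta}{4t}=\tfrac{\beta}{4t}>0,\]
so pigeonhole over the at most $t$ summands produces some $i_w^\star\in \pi_{ew}^{-1}(j)$ with $|b_w(i_w^\star)|\ge \beta/(4t^2)$. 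By construction $\pi_{ew}(i_w^\star)=\pi_{ev}(i_v)$, and injectivity on $A_2^w$ makes $i_w^\star$ unique and distinct for distinct $i_v\in A_1^v$.

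The main obstacle I expect is that this crude pigeonhole places $i_w^\star$ only in the larger set $\widetilde{A}^w=\{i:|b_w(i)|\ge\beta/(4t^2)\}$ rather than in $A_1^w$. I would handle this by combining the above propagation with its symmetric counterpart at $w$ (which analogously maps each $i_w\in A_1^w$ to some $i_v^\star\in \widetilde{A}^v$ with $\pi_{ev}(i_v^\star)=\pi_{ew}(i_w)$) and a short averaging/pigeonhole argument to guarantee that at least one match lands inside $A_1^v\times A_1^w$ for most $e\in E'$. Any additional polynomial-in-$t$ loss incurred here is harmless: it can be absorbed into the final constant, since in the outer proof of Theorem~\ref{thm-NP-hard} the parameter $\zeta$ is chosen at the end to be much smaller than $c\eps^8\beta^4$.
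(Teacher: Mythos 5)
Your setup matches the paper's: you define the same subset $E'$ of edges on which both projections are injective on the second-level sets, and you use smoothness plus the weak expansion bound to get $|E'|\geq(\eps^4/4)|E|$. The problem is in the propagation step, and your own diagnosis of the difficulty (the match landing in $\widetilde{A}^w$ rather than $A_1^w$) is precisely where a genuine gap remains.

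There are two concrete issues. First, you propagate from an arbitrary $i_v\in A^v_1$, which only guarantees $|b_v(i_v)|\geq\beta/4$. The paper instead propagates from the coordinate $i^*$ witnessing Proposition~\ref{prop-max-element}, which has $|b_v(i^*)|\geq\beta$ --- a factor of $4$ more than the $A^v_1$ threshold, and this slack is exactly what the argument consumes. Starting from $\beta$, the lower bound on the constrained sum is $3\beta/4$, and after subtracting the at-most-$\beta/4$ contribution from indices outside $A^w_2$, one still has $\beta/2\geq\beta/4$. Starting from $\beta/4$ as you do, the same bookkeeping yields only $\beta/(4t)$, which is below the $A^w_1$ threshold for $t>1$. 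Second, your pigeonhole over ``at most $t$ summands'' is lossy and unnecessary: since $\pi_{ew}$ is injective on $A^w_2$, the sum $\sum_{i\in\pi_{ew}^{-1}(j)\cap A^w_2}|b_w(i)|$ has \emph{at most one} term, so the entire mass $\geq\beta/2$ sits on a single coordinate, which therefore lies in $A^w_1$. Without that observation you divide by $t$ gratuitously.

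Your proposed repair --- running the propagation symmetrically from both sides and doing an averaging argument --- is not spelled out, and it is unclear it can succeed. Even granting that each direction produces a match, the forward map lands in $\widetilde{A}^w$ and the backward map in $\widetilde{A}^v$, and there is no evident reason one of these must land in $A^v_1\times A^w_1$. More importantly, your claim that a polynomial-in-$t$ loss is ``harmless because $\zeta$ is chosen small at the end'' overlooks the circularity: $t=t(\zeta)$ is \emph{determined by} $\zeta$, so any $t$-dependence in the satisfaction probability makes the requirement $\zeta\lesssim\eps^8\beta^4/\mathrm{poly}(t(\zeta))$ self-referential, and whether it has a solution depends on how fast $t(\zeta)$ grows (which the reduction does not control). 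The paper carefully keeps the satisfaction bound $c\eps^8\beta^4$ free of $t$ precisely so that $\zeta$ can be chosen first, and only then does $t=t(\zeta)$ enter (via the choice of $\gamma$). You should redo the propagation step starting from the $\ell_\infty$-maximizing coordinate and using injectivity to collapse the target sum to a single term.
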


Setting $\gamma$ appropriately as in the above lemma and
$\zeta = c\eps^8\delta^4$ then gives \expref{Lemma}{lem:soundness}; 
indeed, notice that then~$\zeta$, and therefore also~$\gamma$, depend on~$\eps$ alone.

The remainder of this section is devoted to the proof of \expref{Lemma}{lem-good-labeling}.
Let $E' \subseteq E(V_0)$ be the subset of edges $e = (v,w)$ whose 
projections $\pi_{ev}$ and $\pi_{ew}$ are injective on 
the subsets $A^v_2$ and
$A^w_2$ respectively. Formally,
\begin{equation}
	E' = \big\{e = (v,w) \in E(V_0)\, \big|\, |\pi_{ev}(A^v_2)| =
	|A^v_2|,\tn{ and }|\pi_{ew}(A^w_2)| = |A^w_2|\big\}\,.
\end{equation}
We set the parameter $\gamma$ according to the
following proposition which shows a lower bound on $|E'|$ using the smoothness
property.
Recall that~$t$ is a function of~$\zeta$ only.

\begin{proposition}\label{prop:Eprime}
	There exists an absolute constant~$c'>0$ such that for any $\gamma \leq c'\eps^8\beta^4/t^4$, the set~$E'$ has cardinality $|E'| \geq
	(\eps^4/4)|E|$\,.
\end{proposition}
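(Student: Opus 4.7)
The plan is to show that almost all edges in $E(V_0)$ in fact lie in $E'$, using the smoothness condition~\eqref{eq:smoothness} to argue that the projection at each endpoint is rarely collapsing on $A_2^v$ or $A_2^w$, together with the size bound~\eqref{eqn-sizes}. The rest is bookkeeping with the lower bound~\eqref{eqn-dense-edges} on $|E(V_0)|$. No serious obstacle is anticipated: this is exactly the kind of argument that the smoothness property was introduced to support.

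First I would fix a vertex $v \in V_0$. By the smoothness inequality~\eqref{eq:smoothness} applied to each unordered pair $\{i,j\}$ of distinct labels in $A_2^v$, followed by a union bound over the at most $\binom{|A_2^v|}{2} \leq \tfrac{1}{2}|A_2^v|^2$ such pairs, the fraction of edges $e$ incident on $v$ for which $\pi_{ev}$ fails to be injective on $A_2^v$ is at most $\tfrac{1}{2}|A_2^v|^2\gamma$. Substituting the bound $|A_2^v| \leq 16\,t^2/(\eps^2\beta^2)$ from~\eqref{eqn-sizes} produces the per-vertex ``bad edge fraction'' $128\,t^4\gamma/(\eps^4\beta^4)$.

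Next I would globalize the estimate. Let $B_v$ be the set of edges incident on $v$ on which $\pi_{ev}$ is not injective on $A_2^v$. Every edge in $E(V_0)\setminus E'$ has both endpoints in $V_0$ and fails injectivity at (at least) one of them, so $E(V_0)\setminus E' \subseteq \bigcup_{v \in V_0} B_v$. Invoking $d$-regularity of $G$ (so that $|V|\cdot d = 2|E|$),
\[
|E(V_0)\setminus E'| \;\leq\; \sum_{v \in V_0} |B_v| \;\leq\; |V|\cdot d \cdot \frac{128\,t^4\gamma}{\eps^4\beta^4} \;=\; \frac{256\,t^4\gamma}{\eps^4\beta^4}\,|E|.
\]

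Finally, combining with~\eqref{eqn-dense-edges}, the choice $c' = 1/1024$ ensures that whenever $\gamma \leq c'\eps^8\beta^4/t^4$ the right-hand side above is at most $(\eps^4/4)|E|$, so
\[
|E'| \;\geq\; |E(V_0)| - (\eps^4/4)|E| \;\geq\; (\eps^4/2)|E| - (\eps^4/4)|E| \;=\; (\eps^4/4)|E|,
\]
as required. The role of smoothness is precisely to make the per-pair collision probability small enough that the union bound over the $O(t^4/(\eps^4\beta^4))$ pairs in each $A_2^v$ still leaves an exponentially small (in $\eps$) bad fraction; any weaker structural assumption on the Label Cover projections would not suffice here.
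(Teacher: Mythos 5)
Your proof is correct and follows essentially the same route as the paper's: bound the per-vertex fraction of collapsing edges via smoothness plus a union bound over pairs in $A_2^v$, use the size bound~\eqref{eqn-sizes}, sum over $V_0$ using regularity, and subtract from~\eqref{eqn-dense-edges}. The constant $c' = 1/1024$ matches the paper's implicit choice; the only (immaterial) slip is the closing aside describing the remaining bad fraction as ``exponentially small in $\eps$'' when it is in fact $O(\eps^4)$.
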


\begin{proof}
	Consider any vertex $v \in V_0$. By the smoothness property
	of \expref{Theorem}{thm:sml} and a union bound over all distinct pairs~$i,j\in A_v^2$, the fraction of
	edges $e \in E$ incident on $v$ that \emph{do not} satisfy
	\begin{equation}
		|\pi_{ev}(A^v_2)| = |A^v_2| \label{eqn-biject}
	\end{equation}
	is at most 
	\[ 
	\frac{\gamma\, |A^v_2|^2}{2} \leq
	\frac{1}{2}\left(\frac{\eps^8\beta^4}{2^{10}\cdot 
	t^4}\right)\left(\frac{16^2\cdot t^4}{\eps^4\beta^4}\right) =
	\frac{\eps^4}{8}\,,
	\]
	via an appropriate setting of~$c'$. Therefore,
	the number of edges in $E$ that are incident on some
	$v \in V_0$ and do not satisfy~\eqref{eqn-biject} is
	at most
	\beqn
	\sum_{v\in V_0}\frac{\eps^4}{8}\deg(v) 
	\leq
	\frac{\eps^4}{8}\sum_{v\in V}\deg(v)\\
	\leq \frac{\eps^4}{4}|E|\,.
	\eeqn
	Thus, 
	$$|E'| \geq |E(V_0)| - (\eps^4/4)|E| \geq (\eps^4/4)|E|\,,$$
	by Equation \eqref{eqn-dense-edges}.
\end{proof}
The following proposition shows that for an edge $e = (v,w) \in E'$, 
the label sets $A^v_1$ and $A^w_1$ intersect under
projections given by $e$.

\begin{proposition}\label{prop-intersect}
	For every edge $e =(v,w) \in E'$, we have $\pi_{ev}(A^v_1)\cap\pi_{ew}(A^w_1)
	\neq \emptyset$.
\end{proposition}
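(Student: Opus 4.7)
The plan is to argue by contradiction. Suppose that for some $e = (v,w) \in E'$ we have $\pi_{ev}(A^v_1)\cap\pi_{ew}(A^w_1) = \emptyset$. I will exploit the linear constraint \eqref{eqn-Q-def}, which ${\bf b}$ satisfies because ${\bf b} \in \mathcal{H}$, by choosing a suitable coordinate $j^* \in [k]$ and showing that the two sides of the resulting equation cannot balance.

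The right choice of $j^*$ is the projection $\pi_{ev}(i^*)$ of the coordinate $i^* \in [n]$ at which $|b_v|$ attains its maximum. Proposition \ref{prop-max-element} guarantees $|b_v(i^*)| \geq \beta$, so in particular $i^* \in A^v_1$. The left-hand side $\sum_{i \in \pi_{ev}^{-1}(j^*)} b_v(i)$ then splits into the dominant contribution $b_v(i^*)$ and a small tail. To bound the tail, I will use that $\pi_{ev}$ is injective on $A^v_2$ (since $e \in E'$), so $i^*$ is the \emph{only} preimage of $j^*$ that lies in $A^v_2$. Hence every remaining coordinate in $\pi_{ev}^{-1}(j^*)$ has magnitude strictly less than $\beta/(4t)$, and since there are at most $t-1$ such coordinates by the bounded-preimage property of Theorem \ref{thm:sml}, a triangle-inequality estimate gives that the left-hand side has magnitude at least $\beta - (t-1)\beta/(4t) > 3\beta/4$.

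For the right-hand side $\sum_{i \in \pi_{ew}^{-1}(j^*)} b_w(i)$, the contradiction hypothesis $j^* \notin \pi_{ew}(A^w_1)$ says that every $i \in \pi_{ew}^{-1}(j^*)$ lies outside $A^w_1$, so $|b_w(i)| < \beta/4$ for each. Using injectivity of $\pi_{ew}$ on $A^w_2$, at most one such $i$ can belong to $A^w_2$, while the remaining (at most $t-1$) preimages satisfy the stronger bound $|b_w(i)| < \beta/(4t)$. Summing yields a magnitude bound of at most $\beta/4 + (t-1)\beta/(4t) < \beta/2$ for the right-hand side, contradicting the lower bound of $3\beta/4$ on the left-hand side. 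The only delicate aspect of the argument is arranging the two-tier threshold structure $A^v_1 \subseteq A^v_2$ so that the single ``large'' coordinate $i^*$ dominates even after accounting for the cumulative contribution of the up-to-$t$ coordinates in $\pi_{ew}^{-1}(j^*)$; the choice of thresholds $\beta/4$ and $\beta/(4t)$ in the definitions of $A^v_1$ and $A^v_2$ is precisely what leaves this gap open.
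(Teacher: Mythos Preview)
Your proof is correct and follows essentially the same approach as the paper's: both pick $j^* = \pi_{ev}(i^*)$ for a maximal coordinate $i^*$, use injectivity of $\pi_{ev}$ on $A^v_2$ together with the bounded-preimage property to show the $v$-side of the constraint has magnitude at least $3\beta/4$, and then use injectivity of $\pi_{ew}$ on $A^w_2$ to control the $w$-side. The only cosmetic difference is that you frame the $w$-side analysis as a contradiction (assuming $j^* \notin \pi_{ew}(A^w_1)$ and bounding the sum above by $\beta/2$), whereas the paper argues directly that the unique possible element of $\pi_{ew}^{-1}(j^*)\cap A^w_2$ must have magnitude at least $\beta/2$ and hence lie in $A^w_1$.
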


\begin{proof}
	From \expref{Proposition}{prop-max-element}, let $i^* \in [n]$ be such that
	$|b_v(i^*)| \geq \beta$. Note that~$i^*\in A_1^v$. Let ${j^* = \pi_{ev}(i^*)}$.
	Clearly it suffices to show that there exists an~$i'\in A_1^w$ such that~${\pi_{ew}(i') = j^*}$, as this implies that~$j^* \in \pi_{ev}(A^v_1)\cap\pi_{ew}(A^w_1)$.
	
	Recall that since $\mv{b}\in \mc{H}$, the vector
	$\mv{b}$ satisfies the constraint~\eqref{eqn-Q-def}, in particular,
	\begin{equation}
		\bigg| \sum_{i \in \pi_{ev}^{-1}(j^*)}b_v(i) \bigg| =  
		\bigg|\sum_{i\in \pi_{ew}^{-1}(j^*)}b_w(i)\bigg|\,.\label{eq:binH}
	\end{equation}
	We show that because $i^* \in A_1^v$, the left-hand side must be large. 
	Therefore the right hand side is also large, from which we conclude that there must exist a coordinate 
	$i'\in \pi_{ew}^{-1}(j^*)$ such that $|b_w(i')|$ is large, and so $i' \in A_1^w$.
	
	Recall from the second structural property in \expref{Theorem}{thm:sml} that~$|\pi_{ev}^{-1}(j^*)| \leq t$.
	Moreover, since~$\pi_{ev}$ acts injectively on the set~$A_2^v$ and since~$i^*\in A_2^v$, no index~$i\ne i^*$ such that~$\pi_{ev}(i)= \pi_{ev}(i^*)$ can belong to~$A_2^v$. 
	Hence, by the triangle inequality, the left-hand side of~\eqref{eq:binH} is at least
	\begin{align}
		|b_v(i^*)| - \sum_{\substack{i \in
		\pi_{ev}^{-1}(j^*) \\ i\neq i^*}}|b_v(i)| 
		\geq \beta - t\cdot\left(\frac{\beta}{4t}\right)
		 = \frac{3\beta}{4}\,. \label{eqn-vsidesum}
	\end{align}
	
Combining~\eqref{eq:binH},~\eqref{eqn-vsidesum}, and the triangle inequality lets us bound the right-hand side of~\eqref{eq:binH} by
	\begin{align}
	\frac{3\beta}{4}
	&\leq
	\left|\sum_{i\in \pi_{ew}^{-1}(j^*)}b_w(i)\right| \nonumber\\
	&\leq
	\sum_{i\in \pi_{ew}^{-1}(j^*)\cap A_2^w}|b_w(i)|
	+
	\sum_{i\in \pi_{ew}^{-1}(j^*)\smallsetminus A_2^w}|b_w(i)| \nonumber\\
	&\leq
	\sum_{i\in \pi_{ew}^{-1}(j^*)\cap A_2^w}|b_w(i)|
	+
	t\,\frac{\beta}{4t}\,,\label{eq:RHS}
	\end{align}
	where the last inequality uses the same facts as above.
Since~$\pi_{ew}$ acts injectively on~$A_2^w$, there is at most one index~$i\in \pi_{ew}^{-1}(j^*)$ that also belongs to~$A_2^w$, meaning that the sum in~\eqref{eq:RHS} consists of at most one term. 
We see that sum must is at least~$\beta/2$ and in particular, there is an~$i'\in \pi_{ew}^{-1}(j^*)$ such that~$|b_w(i')| \geq \beta/2$.
We conclude that~$i'\in A_1^w$ and~$\pi_{ew}(i') = j^* = \pi_{ev}(i^*)$, proving the claim.
\end{proof}

\begin{proof}[Proof of \expref{Lemma}{lem-good-labeling}]
By \expref{Proposition}{prop-intersect} and Equation
\eqref{eqn-sizes} any edge $e = (v,w) \in E'$ is satisfied by the assignment~$A$ 
with probability at
least $1/(|A^v_1||A^w_1|) \geq (\eps^4\beta^4)/256$. 
Since by \expref{Proposition}{prop:Eprime}, we have $|E'|\geq (\eps^4/4)|E|$, the
expected fraction of satisfied edges  is at least
$\eps^8\beta^4/1024$.
\end{proof}

\section{The commutative case}
\label{sec:comm}

Recall that the \emph{commutative} Little Grothendieck problem 
asks for the norm of a linear operator $\mathcal F: L_2 \to L_1$.
In this section we use \expref{Theorem}{thm-NP-hard} to prove \expref{Theorem}{thm:comm-hard}, the tight hardness result for this problem.
We first consider the real case of \expref{Theorem}{thm:comm-hard}, and then the complex case in \expref{Section}{sec:comm-complex}.

\subsection{The real case}

The real case of \expref{Theorem}{thm:comm-hard} follows easily by combining \expref{Theorem}{thm-NP-hard} with the following simple lemma.

\begin{lemma}
\label{lem:comm}
For every positive integer~$n$ there exists a map~$f:\R^n \to L_1$ with the following properties:
\begin{itemize}
\item For any vector~$a\in \R^n$, we have~$\length{f(a)}_{L_1} \leq \length{a}_{\ell_2}$.
\item For each standard basis vector~$e_i$, we have~$\length{f(e_i)}_{L_1} = 1$.
If~$\length{f(a)}_{L_1} > (\sqrt{2/\pi}+\eps)\length{a}_{\ell_2}$ then~$\length{a}_{\ell_4} >(\eps/K)\length{a}_{\ell_2}$, where~$K<\infty$ is a universal constant.
\end{itemize}
\end{lemma}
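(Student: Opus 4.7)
The plan is to use the classical Rademacher-sum dictatorship test. Define the map $f : \R^n \to L_1\big(\{-1,1\}^n\big)$, where the hypercube is endowed with the uniform probability measure, by
\[
 f(a)(\omega) \;=\; \sum_{i=1}^n a_i\,\omega_i.
\]
The first two properties will be immediate: by Jensen's inequality,
\[
 \length{f(a)}_{L_1} \;\leq\; \length{f(a)}_{L_2} \;=\; \length{a}_{\ell_2},
\]
and for a standard basis vector $e_i$ the function $f(e_i)(\omega)=\omega_i$ takes values in $\{-1,1\}$, giving $\length{f(e_i)}_{L_1}=1$.

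The real work is the third property, which I would obtain from a quantitative CLT. By homogeneity we may assume $\length{a}_{\ell_2}=1$, and then writing $S = \sum_i a_i \omega_i$ the goal becomes
\[
 \Exp\abs{S} \;\leq\; \sqrt{2/\pi} + K\,\length{a}_{\ell_4}
\]
for a universal constant $K$; the statement of the lemma then follows by contraposition. Since $\Exp\abs{G} = \sqrt{2/\pi}$ for $G \sim N(0,1)$, it suffices to show $\abs{\Exp\abs{S} - \Exp\abs{G}} = O(\length{a}_{\ell_4})$. Using $\Exp\abs{S} = \int_0^\infty \Pr[\abs{S} \geq t]\,dt$ (and the same for $G$), I would split the integral at a threshold $T$: the Berry-Ess\'een theorem gives
\[
 \sup_{t}\bigl|\Pr[S \leq t] - \Phi(t)\bigr| \;\leq\; C\sum_{i=1}^n \abs{a_i}^3 \;\leq\; C\,\length{a}_{\ell_4}^2,
\]
where the last step uses $\sum_i |a_i|^3 \leq \length{a}_{\ell_4}^2 \length{a}_{\ell_2}$, so the integral over $[0,T]$ contributes at most $O\bigl(T\length{a}_{\ell_4}^2\bigr)$. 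Meanwhile Chebyshev's inequality bounds both $\Pr[\abs{S}\geq t]$ and $\Pr[\abs{G}\geq t]$ by $1/t^2$, so the tail integral beyond $T$ contributes at most $O(1/T)$. Choosing $T = \length{a}_{\ell_4}^{-1}$ balances the two terms and yields the desired $O(\length{a}_{\ell_4})$ bound.

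I do not anticipate a serious obstacle here: the construction is standard and the only analytic input is the Berry-Ess\'een theorem together with a tail truncation. The one place where one must be a bit careful is ensuring that the constants can be taken to be universal, independent of $n$; this is immediate because Berry-Ess\'een's constant is universal and the Chebyshev tail bound is dimension-free. The complex case in the next subsection will proceed identically, replacing the Rademacher sign $\omega_i$ by a uniformly random point on the unit circle and invoking a complex version of Berry-Ess\'een to compare to $\Exp\abs{Z}=\sqrt{\pi/4}$ for a standard complex Gaussian $Z$.
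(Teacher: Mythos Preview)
Your proposal is correct and takes essentially the same approach as the paper: the map $f$ is identical, the first two properties are verified in the same way, and the third property is obtained from Berry--Ess\'een. The only minor difference is that the paper quotes a black-box first-moment form of Berry--Ess\'een (with an $\ell_\infty$ error term, then uses $\length{a}_{\ell_4}\ge\length{a}_{\ell_\infty}$), whereas you derive the first-moment bound yourself from the CDF version via integration-splitting and land directly on $\length{a}_{\ell_4}$; both routes are standard and equivalent for the purpose of the lemma.
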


This shows that there is an~$L_1$-valued function~$f$ that satisfies the conditions of the real variant of \expref{Theorem}{thm-NP-hard} for~$\tau = \sqrt{2/\pi}$, $\eta = 1$ and~$\delta(\eps) = (\eps/K)$.
Hence, it is \cclass{NP}-hard to approximate the norm of a linear operator~${\mathcal F}:L_2 \to L_1(L_1)$ over~$\R$ to a factor~$\sqrt{2/\pi} + \eps$ for any~$\eps>0$.
The real case of \expref{Theorem}{thm:comm-hard} then follows from the fact that~$L_1(L_1)$ is isometrically isomorphic to~$L_1$ (\ie, there is a bijective isometry 
between the two).

The proof of \expref{Lemma}{lem:comm} uses the following version of the Berry--Ess\'een Theorem (see for example~\cite[Chapter~5.2, Theorem~5.16]{ODonnell:2014}).

\begin{theorem}[Berry--Ess\'een Theorem]\label{thm:BE}
There exists a universal constant~$K<\infty$ such that the following holds.
Let~$n$ be a positive integer and let~$Z_1,\dots,Z_n$ be independent centered $\pmset{}$-valued random variables.
Then, 
for any $\eps > 0$ and %
for any vector $a\in\R^n$ such that~$\length{a}_{\ell_\infty} \leq\eps\length{a}_{\ell_2}$, we have
\beqn
\Bigg|\Exp\Big[\Big|\sum_{i=1}^n  Z_i a_i \Big|\Big] -\sqrt{\frac{2}{\pi}}\length{a}_{\ell_2}\Bigg|
\leq
K\eps\length{a}_{\ell_2}\,.
\eeqn
\end{theorem}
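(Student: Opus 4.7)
The plan is to deduce this bound on the first absolute moment of $S=\sum_{i=1}^n a_iZ_i$ from the classical (Kolmogorov-distance) form of the Berry--Ess\'een theorem and then integrate against the Lebesgue measure, using tail estimates to convert the CDF comparison into a moment comparison.

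By homogeneity I may rescale and assume $\length{a}_{\ell_2}=1$, so that $\Var(S)=\sum_i a_i^2=1$. Let $G\sim\mathcal N(0,1)$, for which $\Exp|G|=\sqrt{2/\pi}$. The goal then becomes $\Exp|S|-\Exp|G|\le K\eps$. The first step is to invoke the classical Berry--Ess\'een theorem for a sum of independent centered summands: there is an absolute constant $C$ with
\[
\sup_{t\in\R}\,\bigl|F_S(t)-\Phi(t)\bigr|\;\le\;C\sum_{i=1}^n\Exp\bigl|a_iZ_i\bigr|^3\;=\;C\sum_{i=1}^n|a_i|^3.
\]
Combining $\sum_i|a_i|^3\le \length{a}_{\ell_\infty}\sum_i a_i^2$ with the hypothesis $\length{a}_{\ell_\infty}\le\eps$, the right-hand side is at most $C\eps$.

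The second step is to convert this CDF estimate into a bound on $\Exp|S|-\Exp|G|$ through the layer-cake identity $\Exp|X|=\int_0^\infty\Pr[|X|>t]\,dt$, which yields
\[
\Exp|S|-\Exp|G|=\int_0^\infty\Bigl(\bigl[\Pr(S>t)-\Pr(G>t)\bigr]+\bigl[\Pr(S<-t)-\Pr(G<-t)\bigr]\Bigr)dt.
\]
The integrand is pointwise bounded by $2C\eps$ from Berry--Ess\'een, and by $O(e^{-t^2/2})$ from combining the standard Gaussian tail with Hoeffding's inequality applied to $S$ (whose summands are bounded and have variance-sum $1$).

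The main obstacle is that naively splitting the integral at some threshold $T$ and using the uniform bound on $[0,T]$ with the Gaussian-tail bound on $[T,\infty)$ produces a spurious $\sqrt{\log(1/\eps)}$ factor, whereas the theorem asks for a clean linear dependence on $\eps$. To remove the logarithm I would take one of two routes: (i) replace the uniform Berry--Ess\'een estimate by its \emph{non-uniform} refinement of Nagaev/Esseen type, $|F_S(t)-\Phi(t)|\le C(1+|t|)^{-3}\sum_i|a_i|^3$, which makes the tail integral absolutely convergent and directly yields an $O(\eps)$ total; or (ii) bypass Kolmogorov distance altogether and appeal to the $W_1$-Wasserstein form of the CLT obtained via Stein's method, namely $|\Exp f(S)-\Exp f(G)|\le C\sum_i|a_i|^3$ for every $1$-Lipschitz~$f$, and then specialize to the $1$-Lipschitz function $f(x)=|x|$. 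Either route produces the desired inequality $\Exp|S|\le \sqrt{2/\pi}+K\eps$ with a universal constant $K$.
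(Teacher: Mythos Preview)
The paper does not actually prove this statement: it is quoted as a known version of the Berry--Ess\'een theorem with a pointer to \cite[Chapter~5.2, Theorem~5.16]{ODonnell:2014}, and is used as a black box in the proof of Lemma~\ref{lem:comm}. So there is no ``paper's proof'' to compare against.

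That said, your sketch is sound. The normalization and the reduction of $\sum_i|a_i|^3$ to $\length{a}_{\ell_\infty}\length{a}_{\ell_2}^2\le\eps$ are the standard moves, and you correctly identify that the uniform Kolmogorov bound combined with a crude cutoff at $T\sim\sqrt{\log(1/\eps)}$ loses a logarithmic factor. Of your two proposed remedies, route~(ii) is the cleanest: the Stein-method Wasserstein-$1$ estimate (e.g., as in Chen--Goldstein--Shao or Ross's survey) gives directly $\bigl|\Exp f(S)-\Exp f(G)\bigr|\le C\sum_i\Exp|a_iZ_i|^3$ for all $1$-Lipschitz $f$, and specializing to $f(x)=|x|$ yields the claim in one line. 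Route~(i) via the non-uniform Berry--Ess\'een bound $|F_S(t)-\Phi(t)|\le C(1+|t|)^{-3}\sum_i|a_i|^3$ also works, since $\int_0^\infty (1+t)^{-3}\,dt<\infty$; this is closer in spirit to what the paper itself does in the \emph{complex} case (Lemma~\ref{lem:BEcomplex}), where it combines a multivariate Berry--Ess\'een bound with a Hoeffding tail and a cutoff---though there the authors are content with a weaker $\sqrt{\eps}$ dependence, so the logarithm issue does not arise.
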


\begin{proof}[Proof of \expref{Lemma}{lem:comm}]
Endow $\pmset{n}$ with the uniform probability measure and define the function $f:\R^n\to L_1(\pmset{n})$ by
\beqn
\big(f(a)\big)(Z_1,\dots,Z_n) = \sum_{i=1}^na_i Z_i\,.
\eeqn
The first property follows since
\beqn
\length{f(a)}_{L_1}
\leq
\length{f(a)}_{L_2}
=
\left(\Exp\Big[\Big|\sum_{i=1}^na_iZ_i\Big|^2\Big]\right)^{1/2}
=
\length{a}_{\ell_2}\,.
\eeqn
The second property is trivial.
The third property follows from \expref{Theorem}{thm:BE}.
Indeed, the theorem implies that if for some~$\eps>0$, we have
\beqn
\length{f(a)}_{L_1}
=
\Exp\Big[\Big|\sum_{i=1}^na_iZ_i\Big|\Big] 
> 
\left(\sqrt{\frac{2}{\pi}} + \eps\right)\length{a}_{\ell_2}\,,
\eeqn
then~$\length{a}_{\ell_\infty} > (\eps/K)\length{a}_{\ell_2}$.
Since~$\length{a}_{\ell_4} \geq \length{a}_{\ell_\infty}$ the last property follows.
\end{proof}

\subsection{The complex case}
\label{sec:comm-complex}

A similar argument to the one above shows the complex case of \expref{Theorem}{thm:comm-hard}.
This follows from the following complex analogue of \expref{Lemma}{lem:comm}.

\begin{lemma}
\label{lem:comm-complex}
For every positive integer~$n$ there exists a map~$f:\C^n \to L_1$ with the following properties:
\begin{itemize}
\item For any vector~$a\in \C^n$, we have~$\length{f(a)}_{L_1} \leq \length{a}_{\ell_2}$.
\item For each standard basis vector~$e_i$, we have~$\length{f(e_i)}_{L_1} = 1$.
If~$\length{f(a)}_{L_1} > (\sqrt{\pi/4}+\eps)\length{a}_{\ell_2}$ then~$\length{a}_{\ell_4} >(\eps^2/K)\length{a}_{\ell_2}$, where~$K<\infty$ is a universal constant.
\end{itemize}
\end{lemma}

This shows that there is an~$L_1$-valued function~$f$ that satisfies the conditions of \expref{Theorem}{thm-NP-hard} for~$\tau = \sqrt{\pi/4}$, $\eta = 1$ and~$\delta(\eps) = (\eps^2/K)$.
Hence, it is \cclass{NP}-hard to approximate the norm of a linear operator~${\mathcal F}:L_2 \to L_1(L_1)$ over~$\C$ to a factor~$\sqrt{\pi/4} + \eps$ for any~$\eps>0$.
The complex case of \expref{Theorem}{thm:comm-hard} then follows as before from the fact that~$L_1(L_1)$ is isometrically 
isomorphic to~$L_1$.

The proof of \expref{Lemma}{lem:comm-complex} is based on the following complex analogue the Berry--Ess\'een Theorem.
Since we could not find this precise formulation in the literature we include a proof below for completeness.

\begin{lemma}[Complex Berry--Ess\'een Theorem]\label{lem:BEcomplex}
There exists a universal constant~$K<\infty$ such that the following holds.
Let~$Z_1,\dots,Z_n$ be independent uniformly distributed random variables over~$\{1, i, -1, -i\}$.
Then, for any vector~$a\in\C^n$ such that~$\length{a}_{\ell_\infty} \leq \eps\length{a}_{\ell_2}$, we have
\beqn
\Bigg|
\Exp\Big[\Big|\sum_{j=1}^n Z_j a_j\Big|\Big] 
-
\sqrt{\frac{\pi}{4}}\length{a}_{\ell_2}
\Bigg|
\leq 
K\sqrt{\eps}\length{a}_{\ell_2}\,.
\eeqn
\end{lemma}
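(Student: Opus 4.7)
The plan is to reduce Lemma~\ref{lem:BEcomplex} to the classical real Berry--Ess\'een theorem by averaging over a uniformly random complex phase. Let $W = e^{i\theta}$ with $\theta$ uniform on $[0,2\pi)$. Since $\frac{1}{2\pi}\int_0^{2\pi}|\cos\theta|\,d\theta = 2/\pi$, a direct calculation shows the identity $|s| = \frac{\pi}{2}\Exp_W[\,|\Re(Ws)|\,]$ for every $s\in\C$. Applying this to $S := \sum_{j=1}^n Z_j a_j$ and swapping the order of expectations gives
\beqn
\Exp_Z[\,|S|\,] \;=\; \frac{\pi}{2}\,\Exp_W\Exp_Z\!\Big[\,\Big|\sum_{j=1}^n \Re(WZ_j a_j)\Big|\,\Big],
\eeqn
so it suffices to bound the inner expectation by $\length{a}_{\ell_2}/\sqrt{\pi} + K'\sqrt\eps\,\length{a}_{\ell_2}$ for every fixed $W$ on the unit circle, uniformly in $W$.

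Fix such a $W$ and set $T_j := \Re(W Z_j a_j)$. Since $Z_j$ is uniform on $\{1,i,-1,-i\}$, the product $W Z_j a_j$ is uniform on the set $\{W a_j,\, iW a_j,\, -Wa_j,\, -iWa_j\}$, so $T_j$ takes the four values $\pm\Re(Wa_j)$ and $\pm\Im(W a_j)$ each with probability $1/4$. Hence $T_j$ is symmetric about~$0$, satisfies $|T_j|\le |a_j|$, has variance $|a_j|^2/2$, and third absolute moment at most $|a_j|^3$. Writing $\sigma^2 := \Var(\sum_j T_j) = \length{a}_{\ell_2}^2/2$ and $Y := \sigma^{-1}\sum_j T_j$, the classical (not-necessarily-identically-distributed) Berry--Ess\'een theorem yields
\beqn
\sup_{x\in\R}\bigl|\Pr[Y\le x] - \Phi(x)\bigr| \;\le\; \frac{C}{\sigma^3}\sum_{j=1}^n\Exp[|T_j|^3] \;\le\; \frac{C\,\length{a}_{\ell_\infty}\length{a}_{\ell_2}^2}{(\length{a}_{\ell_2}/\sqrt{2})^3} \;\le\; C'\eps,
\eeqn
where $\Phi$ is the standard normal CDF and $C,C'$ are absolute constants.

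The main obstacle is to convert this Kolmogorov estimate into a bound on $\Exp[\,|Y|\,]$ (which should be compared with $\Exp[\,|N|\,] = \sqrt{2/\pi}$ for $N$ standard normal), since the absolute value is an unbounded test function. I would handle this by a standard truncation: using $\Exp[\,|X|\,] = \int_0^\infty \Pr[|X|>t]\,dt$,
\beqn
\bigl|\Exp[\,|Y|\,] - \Exp[\,|N|\,]\bigr|
\;\le\; \int_0^A \bigl|\Pr[|Y|>t] - \Pr[|N|>t]\bigr|\,dt \;+\; \int_A^\infty \bigl(\Pr[|Y|>t] + \Pr[|N|>t]\bigr)\,dt.
\eeqn
The first integral is bounded by $2C'A\eps$ via the Kolmogorov estimate above, and the second by $2/A$ via Chebyshev's inequality (since $\Exp[Y^2]=\Exp[N^2]=1$). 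Choosing $A = \eps^{-1/2}$ balances the two contributions and yields $|\Exp[\,|Y|\,] - \sqrt{2/\pi}\,| \le K_0 \sqrt\eps$ for an absolute constant $K_0$. Multiplying by $\sigma = \length{a}_{\ell_2}/\sqrt 2$ and then by the prefactor $\pi/2$ from the rotation identity (and noting that the resulting bound is independent of $W$) produces
\beqn
\Exp_Z[\,|S|\,] \;\le\; \frac{\pi}{2}\Big(\frac{\length{a}_{\ell_2}}{\sqrt 2}\sqrt{\frac{2}{\pi}} + K_0\sqrt\eps\,\frac{\length{a}_{\ell_2}}{\sqrt 2}\Big)
\;=\; \sqrt{\frac{\pi}{4}}\,\length{a}_{\ell_2} + K\sqrt\eps\,\length{a}_{\ell_2},
\eeqn
which is the claimed estimate.
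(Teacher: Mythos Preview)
Your proof is correct, and it takes a genuinely different route from the paper's. The paper views $S$ as a random vector in $\R^2$ via $X_j=\sqrt{2}\big(\Re(Z_ja_j),\Im(Z_ja_j)\big)^{\mathsf T}$, applies Bentkus's multi-dimensional Berry--Ess\'een theorem (Theorem~\ref{thm:bentkus}) to bound $|\Pr[\length{S}_{\ell_2}\le t]-\Pr[\length{g}_{\ell_2}\le t]|$ by $O(\eps)$, and handles the tail $\int_T^\infty$ with Hoeffding's inequality, choosing $T\asymp\eps^{-1/2}$. You instead use the rotation identity $|s|=\tfrac{\pi}{2}\Exp_W[|\Re(Ws)|]$ to reduce to a one-dimensional problem for each fixed phase~$W$, then invoke the classical (scalar) Berry--Ess\'een theorem together with a Chebyshev tail cut at $A=\eps^{-1/2}$. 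Your argument is more elementary in that it avoids the multivariate Berry--Ess\'een machinery and Hoeffding entirely; the paper's approach, on the other hand, gets a slightly sharper tail ($e^{-1/\eps}$ versus $\sqrt{\eps}$), though this is irrelevant since the leading $O(\sqrt{\eps})$ term dominates in both cases. The final constants match because $\tfrac{\pi}{2}\cdot\tfrac{1}{\sqrt{2}}\cdot\sqrt{2/\pi}=\sqrt{\pi/4}$, exactly as you computed.
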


The proof is based on the following multi-dimensional version of the Berry--Ess\'een theorem due to Bentkus~\cite[Theorem~1.1]{Bentkus:2005}.

\begin{theorem}[Bentkus]\label{thm:bentkus}
Let~$X_1,\dots,X_n$ be independent~$\R^d$-valued random variables such that~$\Exp[X_j] = 0$ for each $j\in[n]$.
Let~$S = X_1 + \cdots + X_n$ and assume that the  covariance matrix of~$S$ equals $\Id_d$.
Let~$g \sim \mathcal N(0, \Id_d)$ be a standard Gaussian vector in~$\R^d$ with the same covariance matrix as~$S$.
Then, for any measurable convex set~$A\subseteq \R^d$, we have
\beqn
\big| \Pr[S\in A] - \Pr[g\in A]\big| \leq c(d)\sum_{j=1}^n\Exp\big[\length{X_j}_{\ell_2}^3\big],
\eeqn
where~$c(d) = O(d^{1/4})$.
\end{theorem}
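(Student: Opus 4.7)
The overall plan is to combine two standard ingredients from Berry--Esseen theory: a smoothing inequality that replaces the indicator $\mathds{1}_A$ by a smooth test function, and the Lindeberg replacement technique, which controls expectations of smooth test functions using third-moment information. The $O(d^{1/4})$ factor will enter solely through the smoothing step, via the remarkable fact (due to Ball) that convex bodies in $\R^d$ have Gaussian surface area of order at most $d^{1/4}$.

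First I would fix $\varepsilon > 0$ and build a smooth proxy $f = f_{A,\varepsilon} : \R^d \to [0,1]$ by convolving an indicator of an $\varepsilon$-enlargement of $A$ with a Gaussian kernel of scale $\varepsilon$. This $f$ equals $1$ on $A$, vanishes outside the $O(\varepsilon)$-neighborhood of $A$, and satisfies $\|\nabla^k f\|_\infty \lesssim \varepsilon^{-k}$ for each $k \in \{1,2,3\}$. The smoothing error on the Gaussian side is $|\Pr[g \in A] - \Exp[f(g)]| \leq \gamma_d(\{x : \mathrm{dist}(x,\partial A) \leq O(\varepsilon)\})$, and here I would invoke Ball's theorem, which bounds this annular measure by $O(\varepsilon\, d^{1/4})$ uniformly over measurable convex $A$. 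This is the only place where convexity of $A$ and the dimension $d$ interact.

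Next I would pass to the Lindeberg exchange. Pick independent Gaussians $g_1,\dots,g_n$ with the same individual covariances as $X_1,\dots,X_n$, so that $g := g_1+\cdots+g_n \sim \mathcal N(0,\Id)$. Introduce hybrids $H_j = g_1+\cdots+g_j + X_{j+1}+\cdots+X_n$ and write $\Exp[f(S)] - \Exp[f(g)] = \sum_{j=1}^n \Exp[f(H_{j-1}) - f(H_j)]$. Taylor-expand $f$ around $H_j - g_j = H_{j-1}-X_j$ to third order: the zeroth- and first-order terms vanish since $\Exp[X_j] = \Exp[g_j] = 0$, and the second-order terms cancel because $X_j$ and $g_j$ have matching covariances. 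The third-order remainder is dominated pointwise by $\|\nabla^3 f\|_\infty\, (\|X_j\|^3 + \|g_j\|^3)$, and the Gaussian third moment is controlled by the covariance. Summing over $j$ yields $|\Exp[f(S)] - \Exp[f(g)]| \lesssim \varepsilon^{-3} \sum_j \Exp[\|X_j\|^3]$. To transfer the smoothing bound from the Gaussian side to the $S$-side, I would apply the same exchange argument a second time, to the indicator of an $\varepsilon$-annulus around $\partial A$ (approximated by a smooth function), which lets one ``export'' Ball's estimate from $g$ to $S$ up to a controlled error.

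Putting the pieces together gives, for any $\varepsilon > 0$,
\[
|\Pr[S \in A] - \Pr[g \in A]| \;\lesssim\; \varepsilon\, d^{1/4} \;+\; \varepsilon^{-3} \sum_{j=1}^n \Exp[\|X_j\|^3].
\]
The main obstacle is the final step, because a naive optimization in $\varepsilon$ yields a sub-optimal fractional power of the third-moment sum rather than the claimed \emph{linear} dependence. Bentkus's key innovation, which I would reproduce, is to close the argument by induction on $n$: one shows that the smoothing error on the $S$-side is itself bounded by $c(d)$ times the third-moment sum, by feeding an a~priori weaker bound back into the Lindeberg step and iterating. This self-improving loop is what converts the $\varepsilon^{-3}$ smoothing penalty into a clean factor of $O(d^{1/4}) \cdot \sum_j \Exp[\|X_j\|^3]$, matching the statement.
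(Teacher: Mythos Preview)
The paper does not prove Theorem~\ref{thm:bentkus}; it is quoted verbatim as an external result of Bentkus~\cite[Theorem~1.1]{Bentkus:2005} and used as a black box in the proof of Lemma~\ref{lem:BEcomplex}. There is therefore nothing in the paper to compare your proposal against.

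For what it is worth, your sketch is broadly faithful to the architecture of Bentkus's actual argument: Gaussian smoothing of the indicator of~$A$, Ball's $O(d^{1/4})$ bound on the Gaussian perimeter of convex bodies to control the smoothing error, a Lindeberg replacement with third-order Taylor remainder, and an inductive bootstrap to recover the linear dependence on $\sum_j \Exp[\|X_j\|_{\ell_2}^3]$. The one place where your outline is vaguer than it should be is the bootstrap step: simply ``feeding a weaker bound back'' does not by itself turn the naive $\beta^{3/4}$-type dependence (from optimizing $\varepsilon d^{1/4}+\varepsilon^{-3}\beta$) into a linear one, and Bentkus's induction is more delicate than your last paragraph suggests. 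But none of this is relevant to the present paper, which neither proves nor needs to prove this theorem.
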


We also use the following standard tail bounds.

\begin{lemma}[Gaussian tail bound~\cite{Boucheron:2013}]\label{lem:gausstail}
Let~$g\sim\mathcal N(0,\Id_d)$ be a standard Gaussian vector in~$\R^d$.
Then, for any~$t > 0$, we have
\beqn
\Pr\big[\big|\length{g}_{\ell_2} - \sqrt{d}\big| > t\big] 
\leq
2e^{-t^2/2}\,.
\eeqn
\end{lemma}

\begin{lemma}[Hoeffding's inequality~\cite{Hoeffding:1962}]\label{lem:hoeffding}
Let~$X_1,\dots,X_n$ be independent real-valued random variables such that for each~$i\in[n]$, 
$X_i\in[a_i,b_i]$ for some $a_i<b_i$.
Let~$S = X_1 + \cdots + X_n$.
Then, for any~$t>0$, 
\beqn
\Pr\big[|S - \Exp[S]| > t\big] \leq 2e^{-2t^2/\sum_{i=1}^n(b_i - a_i)^2}.
\eeqn
\end{lemma}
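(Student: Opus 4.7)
The plan is to use the classical Chernoff-bound technique: pass to exponential moments, factorize the moment generating function of the sum using independence, bound each factor via a lemma on the MGF of a bounded zero-mean random variable, and then optimize the resulting inequality. First, I would center by setting $Y_i = X_i - \Exp[X_i]$, so that $\Exp[Y_i] = 0$ and $Y_i$ is supported in an interval of length~$b_i - a_i$. It suffices to prove the one-sided bound $\Pr[\sum_i Y_i > t] \leq \exp(-2t^2/\sum_i (b_i - a_i)^2)$; the left-tail bound follows by applying the same argument to $-Y_i$, and a union bound accounts for the factor of~$2$. For any $\lambda > 0$, Markov's inequality applied to the exponential yields
\beqn
\Pr\Big[\sum_{i=1}^n Y_i > t\Big] \;\leq\; e^{-\lambda t}\, \Exp\Big[e^{\lambda \sum_i Y_i}\Big] \;=\; e^{-\lambda t}\prod_{i=1}^n \Exp\big[e^{\lambda Y_i}\big],
\eeqn
where the last step uses independence.

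The central step is \emph{Hoeffding's MGF lemma}: for any zero-mean random variable $Y$ supported in $[a,b]$ with $a<b$, one has $\Exp[e^{\lambda Y}] \leq \exp(\lambda^2 (b-a)^2 / 8)$. I would prove this by convexity: for $y \in [a,b]$, the bound $e^{\lambda y} \leq \tfrac{b-y}{b-a}\, e^{\lambda a} + \tfrac{y-a}{b-a}\, e^{\lambda b}$ holds, so taking expectations and using $\Exp[Y] = 0$ reduces the problem to analyzing a deterministic function of $\lambda$. Writing the logarithm of that function as $\phi(\lambda)$, a short computation shows $\phi(0) = \phi'(0) = 0$ and $\phi''(\lambda) \leq (b-a)^2/4$ everywhere, the latter via the AM-GM-type inequality $pq/(p+q)^2 \leq 1/4$ applied to the two terms in the convex combination. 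Taylor's theorem with remainder then gives $\phi(\lambda) \leq \lambda^2 (b-a)^2/8$.

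Substituting the MGF bound into each factor of the product yields
\beqn
\Pr\Big[\sum_{i=1}^n Y_i > t\Big] \;\leq\; \exp\Big(-\lambda t + \tfrac{\lambda^2}{8}\sum_{i=1}^n (b_i - a_i)^2\Big),
\eeqn
and optimizing over $\lambda > 0$ with the choice $\lambda = 4t/\sum_i (b_i - a_i)^2$ produces the exponent $-2t^2/\sum_i (b_i - a_i)^2$. Combining with the symmetric lower-tail bound via a union bound completes the proof. The only mildly technical step is Hoeffding's MGF lemma, which in turn reduces to the curvature estimate $\phi''(\lambda) \leq (b-a)^2/4$; this is a standard short calculation, but it is the one place where the specific constant~$8$ (and hence the optimal exponent~$2$ in the final bound) is produced.
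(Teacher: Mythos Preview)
Your proof is correct and is the standard Chernoff--Hoeffding argument. The paper, however, does not actually prove this lemma: it is stated with a citation to Hoeffding's original paper~\cite{Hoeffding:1962} and used as a black box in the proof of Lemma~\ref{lem:BEcomplex}. So there is no proof in the paper to compare against; your argument is exactly the classical one and would be an appropriate proof to supply if one were needed.
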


\begin{proof}[Proof of \expref{Lemma}{lem:BEcomplex}]
Let~$a\in\C^n$ be some vector. By homogeneity we may assume that~$\length{a}_{\ell_2} = 1$.
Set~$\eps=\length{a}_{\ell_\infty}$.
For each~$j\in[n]$ define the random vector~$X_j \in \R^2$ by $X_j = \sqrt{2}[ \Re(Z_ja_j), \Im(Z_ja_j)]^{\mathsf T}$
and note that $\length{X_j}_{\ell_2} = \sqrt{2}|a_j| \leq \sqrt{2}\eps$.
Let~$S = X_1+\cdots+X_n$,
and let~$T \geq \sqrt{8}$ be some number to be set later.
We have
\beq\label{eq:expYint}
\Exp[\length{S}_{\ell_2}] = \int_0^\infty\Pr[\length{S}_{\ell_2} >t]\,dt
= \int_0^T\Pr[\length{S}_{\ell_2} >t]\,dt + 
   \int_T^\infty\Pr[\length{S}_{\ell_2} >t]\,dt\,.
\eeq

We now analyze each integral separately. 
Notice that~$\Exp[X_j] =0$, and $\Exp[X_jX_j^{\mathsf T}] = |a_j|^2\Id_2$.
It follows that the covariance matrix of~$S$ equals~$\Id_2$.
If we let $g\sim\mathcal N(0,\Id_2)$ be a standard Gaussian vector in~$\R^2$, then it follows from~\expref{Theorem}{thm:bentkus} (for $d=2$) that for any~$t>0$, we have
\begin{align}
\big|\Pr\big[\length{S}_{\ell_2} > t\big] - \Pr\big[\length{g}_{\ell_2} > t\big]\big| 
&\leq 
c\sum_{j=1}^n\Exp\big[\length{X_j}_{\ell_2}^3\big]\nonumber\\
&\leq
\sqrt{2}c\eps \sum_{j=1}^n\Exp\big[\length{X_j}_{\ell_2}^2\big]\nonumber\\
&\leq
2\sqrt{2}c\eps\,.\label{eq:gotze2}
\end{align}
Therefore, the first integral in~\eqref{eq:expYint} satisfies
\begin{align}
&\Big|
\int_0^T\Pr\big[\length{S}_{\ell_2} >t\big]\,dt -
\int_0^T\Pr\big[\length{g}_{\ell_2} >t\big]\,dt
\Big|  \nonumber \\
&\qquad
\le
\int_0^T
\Big|
  \Pr\big[\length{S}_{\ell_2} >t\big] -
  \Pr\big[\length{g}_{\ell_2} >t\big]
\Big|\,dt \nonumber  \\ 
&\qquad
 \le
2\sqrt{2}c\eps T\,. \label{eq:gotzebound}
\end{align}
Since $\length{g}_{\ell_2}$ is distributed according to a~$\chi_2$ distribution, we have 
\beq
\int_0^\infty\Pr\big[\length{g}_{\ell_2} >t \big]\,dt = 
\Exp[\length{g}_{\ell_2}] = 
\sqrt{\pi/2}\,. \label{eq:gaussianexpectation}
\eeq
Moreover, it follows from \expref{Lemma}{lem:gausstail} (for $d = 2$) and our assumption on~$T$ that 
\begin{align}
\int_T^\infty\Pr\big[\length{g}_{\ell_2} -\sqrt{2}>t-\sqrt{2}\big]\,dt 
&\leq
\int_T^\infty2e^{-(t - \sqrt{2})^2/2}\,dt\nonumber\\
&\leq \int_T^\infty 2te^{-t^2/8}\,dt\nonumber\\
&= 8e^{-T^2/8}\,,\label{eq:gausslower}
\end{align}
where we used that $(t - \sqrt{2})^2 \ge t^2/4$ for $t \ge \sqrt{8}$.
Combining~\eqref{eq:gotzebound}, \eqref{eq:gaussianexpectation}, and \eqref{eq:gausslower}, we obtain
that the first integral in~\eqref{eq:expYint} satisfies
\begin{align}
&\Big|
\int_0^T\Pr\big[\length{S}_{\ell_2} >t\big]\,dt -
\sqrt{\pi/2} 
\Big|  
 \le
2\sqrt{2}c\eps T + 8e^{-T^2/8}\,. \label{eq:firstintegralfinal}
\end{align}

We now bound the second integral in~\eqref{eq:expYint}, which is clearly nonnegative. The first coordinate~$S_1$ is a sum of independent random variables, $\sqrt{2}\Re(Z_ja_j)$, which are centered and have magnitude at most~$\sqrt{2}|a_j|$.
Similarly, the same holds for~$S_2$.
\expref{Lemma}{lem:hoeffding} therefore gives,
\begin{align}
\int_T^\infty\Pr[\length{S}_{\ell_2} > t]\,dt
& \leq \int_T^\infty \Pr[|S_1| > t/\sqrt{2}]\,dt + \int_T^\infty\Pr[|S_2| > t/\sqrt{2}]\,dt\nonumber \\
&\leq 4\int_T^\infty e^{-t^2/8}\,dt\nonumber \\
&\leq 4\int_T^\infty te^{-t^2/8}\,dt \nonumber \\
&= 16 e^{-T^2/8}\,,\label{eq:integral}
\end{align}
where in the last inequality we used the assumption~$T\geq 1$.

Now set~$T = \sqrt{8/\eps}$.
Combining~\eqref{eq:expYint}, \eqref{eq:firstintegralfinal}, and~\eqref{eq:integral}, we get
\beqn
\Bigg|\Exp\Big[\Big|\sum_{j=1}^n Z_ja_j\Big|\Big] 
-
\sqrt{\frac{\pi}{4}}\Bigg|
=
\Bigg| \frac{1}{\sqrt{2}}\Exp\big[\length{S}_{\ell_2}\big] 
-
\sqrt{\frac{\pi}{4}}\Bigg|
\leq \frac{1}{\sqrt{2}}\left(2\sqrt{2}c\eps T + 24e^{-T^2/8}\right)
\leq
K\sqrt{\eps}\,.
\qedhere
\eeqn
\end{proof}

The proof of \expref{Lemma}{lem:comm-complex} is nearly identical to that of \expref{Lemma}{lem:comm}, now based on~\expref{Lemma}{lem:BEcomplex} and the function $f:\C^n\to L_1(\{1,i,-1,-i\}^n)$ given by
$
\big(f(a)\big)(Z_1,\dots,Z_n) =a_1Z_1 + \cdots + a_nZ_n,
$
where $\{1,i,-1,-i\}^n$ is endowed with the uniform probability measure.

\section{The non-commutative case}
\label{sec:NC}

In this section we complete the proof of our main theorem (\expref{Theorem}{thm:lncg-ughard}).
The following lemma gives the linear matrix-valued map~$f$ mentioned in the introduction. 

\begin{lemma}\label{lem:ourmap}
Let~$n$ be a positive integer and let~$d = 2^{2n + \ceil{n/2}}$.
Then, there exists a linear operator $f~:~\C^n~\to~\C^{d\times d}$ such that for any vector~$a\in \C^n$, we have
\beqn
\length{f(a)}_{S_1} 
\leq 
\sqrt{\frac{\length{a}_{\ell_2}^2 + \length{a}_{\ell_4}^2}{2}}\,.
\eeqn
In particular,  
$\length{f(a)}_{S_1} \leq (\length{a}_{\ell_2} + \length{a}_{\ell_4})/\sqrt{2}$.
Moreover, for each basis vector~$e_i$ we have~$\length{f(e_i)}_{S_1} = 1$.
\end{lemma}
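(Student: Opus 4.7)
The plan is to take $f$ to be a direct sum of Tsirelson-type Clifford embeddings indexed by random phases, and to reduce the $S_1$ bound to an exact closed-form identity for the norm of Tsirelson's map extended to $\C^n$. Concretely, let $D = 2^{\lceil n/2\rceil}$ and let $\gamma_1,\dots,\gamma_n \in \C^{D\times D}$ be pairwise-anticommuting self-adjoint unitaries with $\gamma_i^2 = \Id$. Write $T\colon \C^n \to \C^{D\times D}$ for Tsirelson's complex extension $T(b)=\sum_i b_i \gamma_i$ and set
\[
 f(a) \;=\; \bigoplus_{\theta \in \{1,i,-1,-i\}^n} T(\theta \odot a),
\]
with $\theta \odot a$ the entrywise product. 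Then $f$ is linear into $\C^{d\times d}$ for $d = 4^n\cdot D = 2^{2n + \lceil n/2 \rceil}$, and since $T(\theta \odot e_j) = \theta_j \gamma_j$ is always a unit-modulus scalar times a self-adjoint unitary, every block of $f(e_j)$ has normalized trace norm $1$; hence $\|f(e_j)\|_{S_1} = 1$.

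The crux is the identity (which I expect to be essentially the content of Lemma~\ref{lem:Tmap})
\[
 \|T(b)\|_{S_1}^2 \;=\; \tfrac{1}{2}\Bigl(\|b\|_{\ell_2}^2 + \bigl|\textstyle\sum_{i=1}^n b_i^2\bigr|\Bigr) \qquad \text{for all } b\in \C^n.
\]
To establish it, expand $T(b)^*T(b) = \|b\|_{\ell_2}^2 \Id + 2i\sum_{i<j}\Im(\bar b_i b_j)\,\gamma_i\gamma_j$, write $b = b_R + ib_I$ with $b_R, b_I \in \R^n$, and observe that the real antisymmetric matrix $Y := b_R b_I^\top - b_I b_R^\top$ encoding the perturbation has rank at most $2$ with a single repeated singular value $s = \sqrt{\|b_R\|^2\|b_I\|^2 - \langle b_R,b_I\rangle^2}$. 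An orthogonal change of generators (which preserves the Clifford relations) rotates $Y$ into a single antisymmetric $2\times 2$ block and turns the perturbation into $2is\,\tilde\gamma_1\tilde\gamma_2$, whose square equals $4s^2\Id$. So $T(b)^*T(b)$ has exactly the two eigenvalues $\|b\|_{\ell_2}^2 \pm 2s$, each with multiplicity $D/2$, giving $\|T(b)\|_{S_1}^2 = (\|b\|_{\ell_2}^2 + \sqrt{\|b\|_{\ell_2}^4 - 4s^2})/2$; a short algebraic check identifies $\|b\|_{\ell_2}^4 - 4s^2 = (\|b_R\|^2-\|b_I\|^2)^2 + 4\langle b_R,b_I\rangle^2 = |\sum_i b_i^2|^2$, completing the identity.

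With the identity available, the bound is a one-line Rademacher averaging. The normalized trace norm of a direct sum of equal-sized blocks equals the average of their trace norms, so $\|f(a)\|_{S_1} = \Exp_\theta \|T(\theta\odot a)\|_{S_1}$ for $\theta$ uniform on $\{1,i,-1,-i\}^n$. For such $\theta$ the squares $\epsilon_i := \theta_i^2 \in \{\pm 1\}$ are i.i.d.\ Rademacher, so $\sum_i (\theta\odot a)_i^2 = \sum_i \epsilon_i a_i^2$ and $\Exp_\theta\bigl[\bigl|\sum_i \theta_i^2 a_i^2\bigr|^2\bigr] = \sum_i |a_i|^4 = \|a\|_{\ell_4}^4$. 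Applying Jensen to the convex map $x\mapsto x^2$ and then to the concave $\sqrt{\cdot}$ gives
\[
 \|f(a)\|_{S_1}^2 \;\leq\; \Exp_\theta \|T(\theta\odot a)\|_{S_1}^2 \;=\; \tfrac12\bigl(\|a\|_{\ell_2}^2 + \Exp_\theta\bigl|\textstyle\sum_i \epsilon_i a_i^2\bigr|\bigr) \;\leq\; \tfrac12\bigl(\|a\|_{\ell_2}^2 + \|a\|_{\ell_4}^2\bigr),
\]
which is the stated inequality. The only conceptually nontrivial step is the closed-form computation of $\|T(b)\|_{S_1}$ via the rank-$2$ structure of $Y$; once that is in place the averaging over $\theta$ just captures $\|a\|_{\ell_4}^2$ as the second moment of a Rademacher sum.
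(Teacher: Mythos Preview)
Your proof is correct and follows essentially the same route as the paper: the same Clifford map, the same direct sum over $\{1,i,-1,-i\}^n$, and the same two applications of Jensen. The one notable difference is packaging: where the paper states Lemma~\ref{lem:Tmap} in terms of the parallelogram area $\Lambda(b)$ and then computes $\Exp_\omega[\Lambda(a\circ\omega)^2]$ via a trigonometric expansion (Proposition~\ref{prop:randphase}), you rewrite $\|T(b)\|_{S_1}^2$ as $\tfrac12(\|b\|_{\ell_2}^2 + |\sum_i b_i^2|)$ and observe that $\theta_i^2$ are i.i.d.\ Rademacher, so the averaging reduces to the one-line identity $\Exp_\epsilon|\sum_i \epsilon_i a_i^2|^2 = \|a\|_{\ell_4}^4$. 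This is a clean simplification of the paper's Proposition~\ref{prop:randphase}, but the underlying argument is the same.
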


\expref{Theorem}{thm:lncg-ughard} now follows easily by combining the above lemma with \expref{Theorem}{thm-NP-hard}.
Indeed, \expref{Lemma}{lem:ourmap} shows that the conditions of \expref{Theorem}{thm-NP-hard} hold for~$\tau = 2^{-1/2}$, $\eta = 1$ and $\delta(\eps) = \sqrt{2}\eps$.
It is therefore \cclass{NP}-hard to approximate the norm of a linear operator~$\mathcal F:L_2 \to L_1(S_1)$ to within a factor~$1/\sqrt{2} + \eps$ for any~$\eps >0$.
This implies the theorem because~$L_1(S_1)$ embeds isometrically into~$S_1$. 
To see the last fact, we use the map that takes a matrix-valued function~$g$ on a finite measure space~$U$ to a block diagonal matrix with blocks proportional to $g(u)$ for~$u\in U$ and use the fact that the trace norm of a block diagonal matrix is the average trace norm of the blocks.

The rest of this section is devoted to the proof of~\expref{Lemma}{lem:ourmap}.
For a complex vector~$a\in \C^n$ 
define 
\beq\label{def:parval}
\parea(a) = \sqrt{\length{\Re(a)}_{\ell_2}^2\length{\Im(a)}_{\ell_2}^2 - \big\langle \Re(a),\Im(a)\big\rangle^2}\,.
\eeq
Note that this value is the area of the parallelogram in~$\R^n$ generated by the vectors~$\Re(a)$ and~$\Im(a)$.

\begin{lemma}\label{lem:Tmap}
Let~$n$ be a  positive integer and let~$d' = 2^{\ceil{n/2}}$.
Then, there exists a operator $C:\C^n \to \C^{d'\times d'}$ such that for any vector~$a\in \C^n$, we have
\beq\label{eq:Tmap}
\length{C(a)}_{S_1} = \frac{1}{2}\sqrt{\length{a}_{\ell_2}^2 + 2\parea(a)} + \frac{1}{2}\sqrt{\length{a}_{\ell_2}^2 - 2\parea(a)}\,.
\eeq
\end{lemma}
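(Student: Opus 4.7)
The plan is to take $C$ to be the natural complex extension of Tsirelson's Clifford-algebra embedding alluded to in the introduction. Concretely, I would fix Hermitian matrices $\gamma_1,\ldots,\gamma_n\in\C^{d'\times d'}$ with $d'=2^{\ceil{n/2}}$ satisfying the canonical anticommutation relations $\gamma_j\gamma_k+\gamma_k\gamma_j=2\delta_{jk}\Id$; these exist in this dimension by the standard inductive construction from Kronecker products of Pauli matrices (using only the first $n$ of the $2\ceil{n/2}$ generators produced). Then define the linear map $C(a)=\sum_{j=1}^n a_j\gamma_j$.

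To analyse $\length{C(a)}_{S_1}$, decompose $a=x+iy$ with $x,y\in\R^n$ and set $X=\sum_j x_j\gamma_j$, $Y=\sum_j y_j\gamma_j$; both $X$ and $Y$ are Hermitian. The Clifford relations immediately give
\[
X^2=\length{x}_{\ell_2}^2\,\Id,\qquad Y^2=\length{y}_{\ell_2}^2\,\Id,\qquad XY+YX=2\inner{x,y}\Id.
\]
Since $C(a)^*=X-iY$, a one-line expansion yields
\[
C(a)^*C(a)=X^2+Y^2+i[X,Y]=\length{a}_{\ell_2}^2\,\Id+i[X,Y],
\]
so the analysis is reduced to understanding the spectrum of the Hermitian matrix $M:=i[X,Y]$.

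The key identity is $M^2=4\parea(a)^2\,\Id$. One proves it by expanding $[X,Y]^2=XYXY-XY^2X-YX^2Y+YXYX$, replacing the two middle terms by $\length{x}_{\ell_2}^2\length{y}_{\ell_2}^2\Id$, and using $YX=2\inner{x,y}\Id-XY$ to rewrite $XYXY=2\inner{x,y}XY-\length{x}_{\ell_2}^2\length{y}_{\ell_2}^2\Id$ and symmetrically $YXYX=2\inner{x,y}YX-\length{x}_{\ell_2}^2\length{y}_{\ell_2}^2\Id$; collecting terms and invoking $XY+YX=2\inner{x,y}\Id$ a final time leaves $[X,Y]^2=-4\bigl(\length{x}_{\ell_2}^2\length{y}_{\ell_2}^2-\inner{x,y}^2\bigr)\Id=-4\parea(a)^2\,\Id$. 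This bookkeeping is the only real computation in the proof and I expect it to be the main (if purely mechanical) obstacle, as the cancellations rely on applying the Clifford relations in the right order.

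Since $M$ is Hermitian with $M^2=4\parea(a)^2\Id$, its eigenvalues lie in $\{\pm 2\parea(a)\}$; and because $\tr M=0$ (the trace of any commutator vanishes), both eigenvalues must occur with multiplicity exactly $d'/2$. The degenerate case $\parea(a)=0$ simply means $M=0$, trivially consistent with the formula. Hence $C(a)^*C(a)$ has eigenvalues $\length{a}_{\ell_2}^2\pm 2\parea(a)$ each with multiplicity $d'/2$, and taking the normalized trace of the positive square root produces exactly the expression on the right-hand side of~\eqref{eq:Tmap}.
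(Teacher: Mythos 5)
Your proof is correct and uses the same underlying construction as the paper: the Clifford algebra generators built from Kronecker products of Pauli matrices, with $C(a)=\sum_j a_j \gamma_j$. The paper works with $C(a)C(a)^*$, you with $C(a)^*C(a)$; these are similar matrices, so that difference is cosmetic. The mild divergence is in how the spectrum of the commutator term is determined. The paper decomposes $y=\Im(a)$ into a part $y^\parallel$ parallel to $x=\Re(a)$ and a part $y^\perp$ orthogonal to it, so that $C(y^\parallel)$ commutes with $C(x)$ and $C(y^\perp)$ anticommutes with it; the commutator collapses to $2C(x)C(y^\perp)$, which is visibly skew-Hermitian and traceless with $C(x)C(y^\perp)\bigl(C(x)C(y^\perp)\bigr)^*=\length{x}_{\ell_2}^2\length{y^\perp}_{\ell_2}^2 I$, and the identification $\length{x}_{\ell_2}\length{y^\perp}_{\ell_2}=\parea(a)$ is geometrically immediate. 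You instead expand $[X,Y]^2$ directly using the anticommutation relation $XY+YX=2\inner{x,y}\Id$ and obtain $[X,Y]^2=-4\parea(a)^2\Id$ by brute force; I checked the algebra and it is right. Both routes then finish identically: a Hermitian (or skew-Hermitian) matrix with trace zero whose square is a scalar multiple of the identity has eigenvalues $\pm 2\parea(a)$ in equal multiplicities, giving the two singular values of $C(a)$ each with multiplicity $d'/2$, and the normalized trace of the square root yields the claimed formula. The paper's orthogonal decomposition is a bit more conceptual and makes the parallelogram area appear without needing to expand a degree-four expression, but your computation is perfectly valid and no more than a line or two longer.
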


Though we will not use it here, let us point out that the map~$C$ becomes an isometric embedding if we restrict it to~$\R^n$, since~$\Lambda(a) = 0$ for real vectors.

\begin{proof}
We begin by defining a set of pairwise anti-commuting matrices as follows.
The Pauli matrices are the four Hermitian matrices

\beqn
I=\begin{pmatrix}1&0\\0&1\end{pmatrix},
\quad\quad
X=\begin{pmatrix}0&1\\1&0\end{pmatrix},
\quad\quad
Y=\begin{pmatrix}0&-i\\i&0\end{pmatrix},
\quad\quad
Z=\begin{pmatrix}1&0\\0&-1\end{pmatrix}.
\eeqn

Using these we define $2\ceil{n/2}$ matrices in $\C^{d'\times d'}$ by
\begin{align*}
C_{2j-1} &= \underbrace{Z\otimes \cdots\otimes Z}_{\text{$j-1$ times}} \otimes\, X \otimes \underbrace{I\otimes \cdots\otimes I}_{\text{$\ceil{n/2}-j$ times}}\,,\\[.5cm]
C_{2j} &= \underbrace{Z\otimes \cdots\otimes Z}_{\text{$j-1$ times}} \otimes\, Y \otimes \underbrace{I\otimes \cdots\otimes I}_{\text{$\ceil{n/2}-j$ times}},
\end{align*}
for each~$j \in [\ceil{n/2}]$.
It is easy to verify that these matrices have trace zero, that they are Hermitian, unitary, and that they pairwise anti-commute. 
In particular, they satisfy~$C_j^2 = I$.
For a vector $a\in\C^n$ we define the map~$C$ by $C(a) = a_1C_1 + \cdots + a_nC_n$.
Note that for a \emph{real} vector $x\in\R^n$, the matrix~$C(x)$ is Hermitian and that it satisfies~$C(x)^2 = \|x\|^2_{\ell_2} I$.
If a real vector~$z\in\R^n$ is orthogonal to~$x$ then by expanding the definitions of the matrices~$C(x)$ and~$C(z)$ and using the above properties we find that they anti-commute:
\begin{align*}
C(x)C(z) &= \langle x,z\rangle I + \sum_{j\ne k} x_jz_k C_jC_k\\
&= 0 - \sum_{j\ne k}x_jz_k C_kC_j\\
&= -C(z)C(x)\,.
\end{align*}
This shows that the matrix~$C(x)C(z)$ is skew-Hermitian, which implies that it has purely imaginary eigenvalues.
Since this matrix has trace zero and satisfies $C(x)C(z)\big(C(x)C(z)\big)^* = \|x\|^2_{\ell_2}\|z\|^2_{\ell_2} I$, half the eigenvalues equal~$i\|x\|_{\ell_2}\|z\|_{\ell_2}$ and the other half equal~$-i\|x\|_{\ell_2}\|z\|_{\ell_2}$.

We show that $C$ satisfies~\eqref{eq:Tmap}.
Let~$x = \Re(a)$ and $y = \Im(a)$ so that $C(a) = C(x) + iC(y)$.
Write~$y = y^\parallel + y^\perp$ where $y^\parallel$ is parallel to $x$ and $y^\perp$ is orthogonal to $x$.
Then,
\beqrn
C(a)C(a)^* &=& \big(C(x) + iC(y)\big)\big(C(x) - iC(y)\big)\\
&=& \length{a}_{\ell_2}^2 I - i\big(C(x)C(y) - C(y)C(x)\big)\\
&=& \length{a}_{\ell_2}^2 I - 2i C(x)C(y^\perp)\,,
\eeqrn
where in the last line we used the fact that~$C(y^\parallel)$ commutes with~$C(x)$ while~$C(y^\perp)$ anti-commutes with~$C(x)$.
Using what we deduced above for the matrix~$C(x)C(y^\perp)$ we see that half of the eigenvalues of~$C(a)C(a)^*$ equal $\length{a}_{\ell_2}^2 + 2\length{x}_{\ell_2}\length{y^\perp}_{\ell_2}$ and the other half equal~$\length{a}_{\ell_2}^2 - 2\length{x}_{\ell_2}\length{y^\perp}_{\ell_2}$.
Hence,
\beqn
\length{C(a)}_{S_1} = \frac{1}{2}\sqrt{\length{a}_{\ell_2}^2 + 2\length{x}_{\ell_2}\length{y^\perp}_{\ell_2}} + \frac{1}{2}\sqrt{\length{a}_{\ell_2}^2 - 2\length{x}_{\ell_2}\length{y^\perp}_{\ell_2}}\,.
\eeqn
The claim now follows because~$\length{x}_{\ell_2}\length{y^\perp}_{\ell_2}$ is precisely the area of the parallelogram generated by the vectors~$x$ and~$y$.
\end{proof}

We denote the entry-wise product of two vectors~$a,b\in\C^n$ by~$a\circ b = (a_1b_1,\dots,a_nb_n)$.

\begin{proposition}\label{prop:randphase}
Let $\omega$ be a vector chosen uniformly from $\{1, i, -1, -i\}^n$.
Then, for any~$a\in\C^n$, we have
\beqn
4\Exp_\omega\big[\Lambda(a\circ \omega)^2\big] = \length{a}_{\ell_2}^4 - \length{a}_{\ell_4}^4\,.
\eeqn
\end{proposition}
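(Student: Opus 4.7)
The plan is to use the Lagrange identity to rewrite $\Lambda^2$ as a sum over pairs of indices, then exploit linearity of expectation so that only one pair of phases needs to be handled at a time. Specifically, for any $b \in \C^n$ the Lagrange identity gives
\[
\Lambda(b)^2 = \length{\Re(b)}_{\ell_2}^2 \length{\Im(b)}_{\ell_2}^2 - \inner{\Re(b), \Im(b)}^2 = \sum_{j<k} \bigl(\Re(b_j)\Im(b_k) - \Re(b_k)\Im(b_j)\bigr)^2,
\]
and a short calculation shows that $\Re(b_j)\Im(b_k) - \Re(b_k)\Im(b_j) = \Im(\overline{b_j}b_k)$. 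Therefore, taking $b = a \circ \omega$ so that $\overline{b_j}b_k = \overline{a_j}a_k\,\overline{\omega_j}\omega_k$, one obtains
\[
\Lambda(a\circ\omega)^2 = \sum_{j<k} \Im\bigl(\overline{a_j}a_k\,\overline{\omega_j}\omega_k\bigr)^2.
\]

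The next step is to compute $\Exp_\omega\bigl[\Im(\overline{a_j}a_k\,\overline{\omega_j}\omega_k)^2\bigr]$ for each fixed pair $j\ne k$. Here I would note that for $j\ne k$ the product $\xi = \overline{\omega_j}\omega_k$ is uniformly distributed on $\{1,i,-1,-i\}$ (independent of the other phases is not needed because of linearity, but it is true conditionally on them). Writing $u = \overline{a_j}a_k = \alpha + i\beta$, the four possible values of $\Im(u\xi)^2$ over $\xi \in \{1,i,-1,-i\}$ are $\beta^2, \alpha^2, \beta^2, \alpha^2$, whose average is $\tfrac{1}{2}(\alpha^2+\beta^2) = \tfrac{1}{2}|u|^2 = \tfrac{1}{2}|a_j|^2|a_k|^2$.

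Combining the two steps by linearity of expectation yields
\[
\Exp_\omega\bigl[\Lambda(a\circ\omega)^2\bigr] = \sum_{j<k} \tfrac{1}{2}|a_j|^2|a_k|^2 = \tfrac{1}{4}\Bigl(\sum_{j,k}|a_j|^2|a_k|^2 - \sum_j |a_j|^4\Bigr) = \tfrac{1}{4}\bigl(\length{a}_{\ell_2}^4 - \length{a}_{\ell_4}^4\bigr),
\]
and multiplying by $4$ gives the claim. There is no real obstacle; the only step that benefits from a moment's thought is identifying $\Re(b_j)\Im(b_k) - \Re(b_k)\Im(b_j)$ with $\Im(\overline{b_j}b_k)$, which makes the phase symmetry $\omega_k \mapsto i^m\omega_k$ cleanly visible and ensures the cross-pair correlations do not matter because linearity of expectation reduces everything to a single-pair computation.
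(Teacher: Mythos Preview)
Your proof is correct and follows essentially the same strategy as the paper: both arguments expand $\Lambda(a\circ\omega)^2$ as a sum over off-diagonal pairs $j\ne k$ and then compute each pairwise expectation using the uniformity of the phase $\overline{\omega_j}\omega_k$ (equivalently, of $\psi_j+\psi_k$). Your use of the Lagrange identity and the identification $\Re(b_j)\Im(b_k)-\Re(b_k)\Im(b_j)=\Im(\overline{b_j}b_k)$ is a tidy repackaging of the paper's direct trigonometric expansion, but the underlying computation is the same.
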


\begin{proof}
Fix a vector~$a\in\C^n$.
Define the random vectors~$x_\omega = \Re(a\circ \omega)$ and~$y_\omega = \Im(a\circ\omega)$.
Then $\Lambda(a\circ\omega)^2 = \length{x_\omega}_{\ell_2}^2\length{y_\omega}_{\ell_2}^2 - \langle x_\omega,y_\omega\rangle^2$.
For each~$j\in[n]$ we factor~$a_j = \alpha_j e^{i\phi_j}$ and~$\omega_j = e^{i\psi_j}$, where $\alpha_j \in\R_+$ and~$\phi_j,\psi_j\in[0,2\pi]$.
Note that~$\psi_1,\dots,\psi_n$ are independent uniformly distributed random phases in~$\{0,\pi/2,\pi,3\pi/2\}$.
Then,
\beqrn
\length{x_\omega}_{\ell_2}^2 &=& \sum_{j=1}^n \alpha_j^2 \cos^2(\phi_j+\psi_j)\\
\length{y_\omega}_{\ell_2}^2 &=& \sum_{j=1}^n \alpha_j^2 \sin^2(\phi_j+\psi_j)\\
\langle x_\omega,y_\omega\rangle &=& \sum_{j=1}^n \alpha_j^2 \cos(\phi_j+ \psi_j)\sin(\phi_j+\psi_j)\,.
\eeqrn
With this it is easy to verify that
\begin{align}
\length{x_\omega}_{\ell_2}^2\length{y_\omega}_{\ell_2}^2 - \langle x_\omega,y_\omega\rangle^2
=
&\sum_{j\ne k}\alpha_j^2\alpha_k^2\cos^2(\phi_j + \psi_j)\sin^2(\phi_k + \psi_k) - \nonumber\\
& \sum_{j\ne k}\alpha_j^2\alpha_k^2 \cos(\phi_j + \psi_j)\sin(\phi_j + \psi_j)\cos(\phi_k + \psi_k)\sin(\phi_k + \psi_k)\,.
\label{eq:trigexpand}
\end{align}
By independence of~$\psi_j$ and~$\psi_k$ when~$j\ne k$ and the elementary identities 
$
\Exp[\cos^2(\phi_j +\psi_k)] = 1/2$, $\Exp[\sin^2(\phi_j +\psi_k)] = 1/2
$
and
$\Exp[\cos(\phi_j + \psi_j)\sin(\phi_j + \psi_j)]=0$,
the expectation of~\eqref{eq:trigexpand} equals
\beqn
\Exp_\omega\big[\Lambda(a\circ\omega)^2\big] = 
\frac{1}{4}\sum_{j\ne k}a_j^2a_k^2 = \frac{1}{4}\big(\length{a}_{\ell_2}^4 - \length{a}_{\ell_4}^4\big)\,.
\qedhere
\eeqn
\end{proof}

We remark that in the above proof, it suffices if $\omega \in \{1, i, -1, -i\}^n$ is chosen from a pairwise independent family. Using this in the proof below, allows one to prove \expref{Lemma}{lem:ourmap} with a smaller parameter $d$. 

\begin{proof}[Proof of \expref{Lemma}{lem:ourmap}]
Let~$C$ be the map given by \expref{Lemma}{lem:Tmap}.
Define the map
\beqn
f(a) = \bigoplus_{\omega} C(a\circ w)
\eeqn
where~$\omega$ ranges over over~$\{1,i,-1,-i\}^n$.
By convexity of the square function, Jensen's inequality, and the fact that~$\length{a\circ \omega}_{\ell_2} = \length{a}_{\ell_2}$, we have
\begin{align}
\length{f(a)}_{S_1}^2 
&= 
\Big(\Exp_\omega\big[ \length{C(a\circ \omega)}_{S_1} \big]\Big)^2 \nonumber\\
&\stackrel{\text{\expref{Lemma}{lem:Tmap}}}{=} 
\left(\Exp_\omega\left[  \frac{1}{2}\sqrt{\length{a\circ \omega}_{\ell_2}^2 + 2\parea(a\circ\omega)} + \frac{1}{2}\sqrt{\length{a\circ \omega}_{\ell_2}^2 - 2\parea(a\circ\omega)} \right]\right)^2\nonumber\\
&\leq
\Exp_\omega\left[  \left(\frac{1}{2}\sqrt{\length{a}_{\ell_2}^2 + 2\parea(a\circ\omega)} + \frac{1}{2}\sqrt{\length{a}_{\ell_2}^2 - 2\parea(a\circ\omega)}\right)^2 \right]\nonumber\\
&= \frac{\length{a}_{\ell_2}^2}{2} + \frac{1}{2}\Exp_\omega\left[\sqrt{\length{a}_{\ell_2}^4 - 4\Lambda(a\circ \omega)^2}\right].
 \label{eq:ourmap1}
\end{align}
Concavity of the square-root function, Jensen's inequality and \expref{Proposition}{prop:randphase} gives that the expectation in~\eqref{eq:ourmap1} is at most
\beqn
\left(\length{a}_{\ell_2}^4 - 4\Exp\big[\Lambda(a\circ\omega)^2\big] \right)^{1/2}
=
\left(\length{a}_{\ell_2}^4 - \length{a}_{\ell_2}^4 + \length{a}_{\ell_4}^4\big] \right)^{1/2}
=
\length{a}_{\ell_4}^2\,.
\eeqn
Hence,
\beqn
\length{f(a)}_{S_1} \leq \sqrt{\frac{\length{a}_{\ell_2}^2 + \length{a}_{\ell_4}^2}{2}}\,.
\eeqn
For the second claim observe that for any standard basis vector~$e_j$ and~$\omega\in\{1,i,-1,-i\}^n$, the vector~$e_j\circ\omega$ is either purely real or purely imaginary. This implies~$\Lambda(e_j\circ\omega) = 0$.
Hence, by \expref{Lemma}{lem:Tmap},
\beqn
\length{f(e_i)}_{S_1} = 
\frac{1}{2}\Exp_\omega\left[\sqrt{\length{e_j\circ\omega}_{\ell_2}^2 + 2\parea(e_j\circ\omega)} + \sqrt{\length{e_j\circ\omega}_{\ell_2}^2 - 2\parea(e_j\circ\omega)}\right]
= 1\,.
\qedhere
\eeqn
\end{proof}

\subsection{The real and Hermitian variants}
\label{sec:NC-RH}

We end this section by showing that our hardness result of \expref{Theorem}{thm:lncg-ughard} 
also holds for two variants of the Little NCG, the real variant and the Hermitian variant. 
Both variants were introduced (in the context of the ``big'' NCG) in~\cite{Naor:2013}, partly for the purpose of using them in applications. 
The real variant asks for the operator norm of a linear map~$\mathcal F$ from~$\R^n$ to a space $\R^{d\times d}$ endowed with the Schatten-1 norm; in the Hermitian variant, the linear map is from~$\R^n$ to the space $H_d\subseteq\C^{d\times d}$ of Hermitian matrices, again endowed with the Schatten-1 norm. In both cases the operator norm is given by $\length{\mathcal F} = \sup_a\length{F(a)}_{S_1}$ with the supremum over real unit vectors $a$.
Both the real and Hermitian variants follow directly by combining the lemma shown below and the real version of \expref{Theorem}{thm-NP-hard}.
Let us denote by~$\Sym^{d\times d}\subseteq \R^{d\times d}$ the space of real symmetric matrices.

\begin{lemma}\label{lem:RH-map}
Let~$n$ be a positive integer and let~$d$ be as in \expref{Lemma}{lem:ourmap}.
Then, there exists a linear operator $f:\R^n~\to~\Sym^{4d\times 4d}$ satisfying the conditions stated in \expref{Lemma}{lem:ourmap} (with~$a\in\R^n$).
\end{lemma}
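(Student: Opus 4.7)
The plan is to take the complex linear map $f_0 : \C^n \to \C^{d\times d}$ provided by Lemma~\ref{lem:ourmap}, restrict its domain to $\R^n$, and then postcompose with two standard $\R$-linear, dimension-doubling operations, each of which preserves the normalized Schatten-1 norm. Concretely, I form the Hermitian dilation
\beqn
\mathcal{D}(M) = \begin{pmatrix} 0 & M \\ M^* & 0 \end{pmatrix} \in \C^{2d \times 2d},
\eeqn
which is Hermitian for any $M \in \C^{d \times d}$, and then the standard real form
\beqn
\mathcal{R}(H) = \begin{pmatrix} A & -B \\ B & A \end{pmatrix}
\eeqn
of a Hermitian matrix $H = A + iB$, where $A$ is real symmetric and $B$ is real antisymmetric. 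Since $(-B)^T = B$, the output $\mathcal{R}(H)$ lies in $\Sym^{2m \times 2m}$ whenever $H$ has dimension $m$. Set $f(a) = \mathcal{R}(\mathcal{D}(f_0(a)))$; this lies in $\Sym^{4d \times 4d}$, and because $f_0$ is complex linear and both $\mathcal{D}$ and $\mathcal{R}$ are real linear, $f$ is $\R$-linear on $\R^n$.

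Next, both operations preserve the spectrum up to multiplicity. For any $M$, $\mathcal{D}(M)^2$ is block-diagonal with blocks $MM^*$ and $M^*M$, so the singular values of $\mathcal{D}(M)$ are exactly those of $M$, each with multiplicity two. For any Hermitian $H$ of dimension $m$, an eigenvector $u + iv$ of $H$ with (real) eigenvalue $\lambda$ yields the two orthogonal real eigenvectors $(u,v)^T$ and $(-v,u)^T$ of $\mathcal{R}(H)$ with the same eigenvalue $\lambda$, so the eigenvalues of $\mathcal{R}(H)$ are those of $H$ each with multiplicity two, and the same holds for singular values. Hence the singular values of $f(a)$ are those of $f_0(a)$ each with multiplicity four, and since the normalized Schatten-1 norm is the average of the singular values,
\beqn
\length{f(a)}_{S_1} \;=\; \length{f_0(a)}_{S_1}.
\eeqn

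The two conclusions of Lemma~\ref{lem:ourmap} now transfer directly. For every $a \in \R^n$ we obtain $\length{f(a)}_{S_1} \leq \sqrt{(\length{a}_{\ell_2}^2 + \length{a}_{\ell_4}^2)/2}$, and for each standard basis vector $e_i \in \R^n$ we obtain $\length{f(e_i)}_{S_1} = 1$, as required. There is no substantive obstacle here; the only things to verify are the routine facts that $\mathcal{D}$ maps into Hermitian matrices, $\mathcal{R}$ maps Hermitian matrices into real symmetric matrices, and the singular-value bookkeeping above. The analogous statement needed for the real variant of the Little NCG (where the output is $\R^{4d \times 4d}$ rather than $\Sym^{4d \times 4d}$) follows \emph{a fortiori} since $\Sym^{4d \times 4d} \subseteq \R^{4d \times 4d}$.
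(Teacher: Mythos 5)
Your proposal is correct and follows essentially the same two-step construction as the paper: Hermitian dilation $M \mapsto \big[\begin{smallmatrix}0 & M\\ M^* & 0\end{smallmatrix}\big]$ followed by the standard real form of a Hermitian matrix, each doubling multiplicities while preserving singular values (the paper's block form $\big[\begin{smallmatrix}\Re & \Im\\ -\Im & \Re\end{smallmatrix}\big]$ and yours $\big[\begin{smallmatrix}\Re & -\Im\\ \Im & \Re\end{smallmatrix}\big]$ differ only by a similarity). The paper cites Horn and Johnson for the spectral facts where you verify them directly, but the argument is otherwise the same.
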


The lemma follows by applying the map $\rho$ of the elementary claim below
to the restriction of the operator~$f$ of \expref{Lemma}{lem:ourmap} to $\R^n$.

\begin{claim}
For every positive integer~$d$ there exists a map~$\rho:\C^{d\times d} \to \Sym^{4d\times 4d}$ such that for any matrix $A\in\C^{d\times d}$, we have~$\length{\rho(A)}_{S_1} = \length{A}_{S_1}$. Moreover,~$\rho$ is linear over the real numbers, that is, for any~$\alpha\in \R$ and~$A,B\in\C^{d\times d}$, we have~$\rho(\alpha A) = \alpha\rho(A)$ and~$\rho(A+B) = \rho(A) + \rho(B)$.
\end{claim}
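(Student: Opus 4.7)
The plan is to construct $\rho$ as the composition of two standard norm-preserving embeddings: a ``realification'' of complex matrices, followed by an off-diagonal symmetrization.

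First I would define the real representation $\phi : \C^{d\times d} \to \R^{2d \times 2d}$ by writing $A = X + iY$ with $X, Y \in \R^{d \times d}$ and setting
\[
\phi(A) = \begin{pmatrix} X & -Y \\ Y & X \end{pmatrix}.
\]
This is manifestly $\R$-linear. The key property I need is that the singular values of $\phi(A)$ are exactly those of $A$, each with multiplicity two. I would verify this by a direct computation showing $\phi(A)^{\mathsf T}\phi(A) = \phi(A^*A)$, so that $\phi(A)^{\mathsf T}\phi(A)$ is the real representation of the Hermitian PSD matrix $A^*A$. A short argument (lifting a complex eigenvector $v = x+iy$ of $A^*A$ with real eigenvalue $\lambda$ to the two real eigenvectors $(x,y)$ and $(-y,x)$ of $\phi(A^*A)$) shows that each eigenvalue of $A^*A$ appears with multiplicity two in $\phi(A^*A)$, yielding $\sum_i \sigma_i(\phi(A)) = 2\sum_i \sigma_i(A)$.

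Next I would apply the symmetrization $\psi : \R^{m\times m} \to \Sym^{2m \times 2m}$ defined by $\psi(M) = \bigl(\begin{smallmatrix} 0 & M \\ M^{\mathsf T} & 0 \end{smallmatrix}\bigr)$. This is also $\R$-linear, and a brief computation from the SVD $M = U\Sigma V^{\mathsf T}$ shows that $\psi(M)$ has eigenvalues exactly $\pm \sigma_i(M)$, so $\sum_i |\lambda_i(\psi(M))| = 2\sum_i \sigma_i(M)$.

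Setting $\rho = \psi \circ \phi$ yields an $\R$-linear map $\C^{d\times d} \to \Sym^{4d\times 4d}$. Recalling that the $S_1$ norm in the paper is the normalized trace norm (divided by the matrix dimension), I obtain
\[
\length{\rho(A)}_{S_1} = \frac{1}{4d}\sum_i |\lambda_i(\rho(A))| = \frac{1}{4d}\cdot 2\cdot 2\sum_i \sigma_i(A) = \frac{1}{d}\sum_i \sigma_i(A) = \length{A}_{S_1},
\]
so $\rho$ is the required isometry. There is essentially no obstacle in this argument; the only point needing a line of justification is the singular-value doubling under $\phi$, which ultimately reflects the fact that $\phi$ realizes left multiplication by $A \in \C^{d\times d}$ on $\C^d \cong \R^{2d}$ as a real matrix.
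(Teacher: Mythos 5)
Your proof is correct and uses essentially the same two building blocks as the paper---a realification $\C^{d\times d}\to\R^{2d\times 2d}$ that doubles singular-value multiplicities and an off-diagonal symmetrization $M\mapsto\bigl[\begin{smallmatrix}0&M\\M^{\mathsf T}&0\end{smallmatrix}\bigr]$ that turns singular values into $\pm$ eigenvalue pairs---only composed in the opposite order (the paper Hermitianizes first via $A\mapsto\bigl[\begin{smallmatrix}0&A\\A^*&0\end{smallmatrix}\bigr]$ and then realifies the resulting Hermitian matrix). The only cosmetic difference is that you verify the spectral facts directly (via $\phi(A)^{\mathsf T}\phi(A)=\phi(A^*A)$ and the SVD of $M$), whereas the paper cites Horn and Johnson.
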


\begin{proof}
The proof follows by combining two standard transformations taking complex matrices to Hermitian matrices and real matrices, respectively.
Let~$A\in\C^{d\times d}$ be a matrix with singular values $\sigma_1\geq\cdots\geq \sigma_d$.
The first transformation is given by~$A\mapsto \big[\begin{smallmatrix}0& A\\A^*& 0\end{smallmatrix}\big]$.
By~\cite[Theorem~7.3.3]{HornJohnson}, the last matrix has eigenvalues~$\sigma_1 \geq\cdots\geq \sigma_d\geq -\sigma_d\geq \cdots\geq -\sigma_1$.
Notice that this transformation is linear over the reals since the adjoint is such.
Let~$B\in\C^{d\times d}$ be a Hermitian matrix with eigenvalues $\lambda_1\geq \cdots\geq\lambda_d$. 
The second transformation is given by
\[
  B\mapsto \begin{bmatrix}\Re(B) &\Im(B)\\-\Im(B) &\Re(B)\end{bmatrix}\,.
\]
Then the last matrix is symmetric and by~\cite[1.30.P20~(g), p.~71]{HornJohnson}, that matrix has the same eigenvalues as~$B$  but with doubled multiplicities, that is, the matrix has eigenvalues $\lambda_1\geq\lambda_1\geq \cdots\geq \lambda_d\geq \lambda_d$.
Notice that this transformation is also linear over the reals.
Let~$\rho$ be the composition of these maps.
Then the matrix~$\rho(A)$ has the same singular values as~$A$ but with quadrupled multiplicities, which implies that~$\length{\rho(A)}_{S_1} = \length{A}_{S_1}$, and~$\rho$ is linear over the reals.
\end{proof}

\section{Little versus big Grothendieck theorem}
\label{sec:little-big}

For completeness, we include here the well-known relation between the little and big Grothendieck problems. 
We focus on the non-commutative case; the commutative case is similar and can be found in, \eg,~\cite[Section~5]{Pisier:2012}.
This discussion clarifies how to derive~\expref{Theorem}{thm:ncgt-ughard} from \expref{Theorem}{thm:lncg-ughard}.

Consider a linear map~$\mathcal F: \C^n \to S_1^d$.
A standard and easy-to-prove fact is that for two finite-dimensional Banach spaces~$X,Y$, the operator norm of a linear map~$\mathcal G: X\to Y$ equals the norm of its adjoint~$\mathcal G^*: Y^* \to X^*$.
As a result, $\length{\mathcal F}  = \length{\mathcal F^*}$.
Notice that since Hilbert space is self-dual and the dual of~$S_1$ is the space~$S_\infty$ of matrices endowed with the Schatten-$\infty$ norm (\ie, the maximum singular value), 
we have that $\mathcal F^*: S_\infty^d \to \C^n$. 
In particular, 
\[
\length{\mathcal F^*} = \sup \length{\mathcal F^*(A)}_2\,,
\]
where the supremum is taken over all $A$ of Schatten-$\infty$ norm at most $1$.
Equivalently, since any matrix with Schatten-$\infty$ norm at most~1 lies in the convex hull of the set of unitary matrices, we could take the supremum over all unitary matrices $A$.
Next, recall that in the NCG problem we 
are given a bilinear form~$T: \C^{d\times d}\times\C^{d\times d}\to \C$, 
and asked to compute $\opt(T) = \sup_{A,B} \big|T(A,B)\big|$, where the supremum ranges over unitary matrices.
Define the bilinear form $T(A,B) = \langle \mathcal F^*(A), \mathcal F^*(B)\rangle$.
By Cauchy-Schwarz,
\beqn
\opt(T) = \sup_A\length{\mathcal F^*(A)}_2^2 =
\length{\mathcal F^*}^2  = \length{\mathcal F}^2\,,
\eeqn
where the supremum is over all unitary~$A$,
showing that the Little NCG is a special case of the ``big'' NCG.

\bibliographystyle{tocplain}   %
\bibliography{v013a015}

\providecommand{\bibhead}[1]{}
\expandafter\ifx\csname pdfbookmark\endcsname\relax%
  \providecommand{\tocrefpdfbookmark}{}
\else\providecommand{\tocrefpdfbookmark}{%
   \hypertarget{tocreferences}{}%
   \pdfbookmark[1]{References}{tocreferences}}%
\fi

\tocrefpdfbookmark
\begin{thebibliography}{10}

\bibitem{Alon:2006}\bibhead{Alon:2006}
{\sc Noga Alon and Assaf Naor}: Approximating the cut-norm via {G}rothendieck's
  inequality.
\newblock {\em SIAM J. Comput.}, 35(4):787--803, 2006.
\newblock Preliminary version in
  \href{http://dx.doi.org/10.1145/1007352.1007371}{STOC'04}.
\newblock [\epfmtdoi{10.1137/S0097539704441629}]

\bibitem{Bandeira:2013}\bibhead{Bandeira:2013}
{\sc Afonso~S. Bandeira, Christopher Kennedy, and Amit Singer}: Approximating
  the little {G}rothendieck problem over the orthogonal group.
\newblock {\em Math. Program.}, 160(1-2):433--475, 2016.
\newblock [\epfmtdoi{10.1007/s10107-016-0993-7}, \epfmt{arxiv}{1308.5207}]

\bibitem{BarakBHKSZ12}\bibhead{BarakBHKSZ12}
{\sc Boaz Barak, Fernando G. S.~L. Brand{\~{a}}o, Aram~W. Harrow, Jonathan~A.
  Kelner, David Steurer, and Yuan Zhou}: Hypercontractivity, sum-of-squares
  proofs, and their applications.
\newblock In {\em Proc. 44th STOC}, pp. 307--326. ACM Press, {2012}.
\newblock [\epfmtdoi{10.1145/2213977.2214006}, \epfmt{arxiv}{1205.4484}]

\bibitem{Bentkus:2005}\bibhead{Bentkus:2005}
{\sc Vidmantas Bentkus}: A {Lyapunov} type bound in $\mathbb{R}^d$.
\newblock {\em Theory Probab. Appl.}, 49(2):311--323, 2005.
\newblock Translated from Russian.
\newblock [\epfmtdoi{10.1137/S0040585X97981123}]

\bibitem{Boucheron:2013}\bibhead{Boucheron:2013}
{\sc St{\'e}phane Boucheron, G{\'a}bor Lugosi, and Pascal Massart}: {\em
  Concentration Inequalities}.
\newblock Oxford Univ. Press, 2013.

\bibitem{Braverman:2013}\bibhead{Braverman:2013}
{\sc Mark Braverman, Konstantin Makarychev, Yury Makarychev, and Assaf Naor}:
  The {G}rothendieck constant is strictly smaller than {K}rivine's bound.
\newblock {\em Forum Math. Pi}, 1:e4, 42, 2013.
\newblock Preliminary version in
  \href{http://dx.doi.org/10.1109/FOCS.2011.77}{FOCS'11}.
\newblock [\epfmtdoi{10.1017/fmp.2013.4}]

\bibitem{BrietFV10}\bibhead{BrietFV10}
{\sc Jop Bri{\"e}t, Fernando~M{\'a}rio de~Oliveira~Filho, and Frank Vallentin}:
  The {G}rothendieck problem with rank constraint.
\newblock In {\em Proc. 19th Symp. Mathem. Theory of Networks and Systems
  ({MTNS}'10)}, pp. 111--113, 2010.
\newblock
  \href{http://www.conferences.hu/mtns2010/proceedings/Papers/018_335.pdf}{MTNS}.

\bibitem{BrietRegevSaket:FOCS}\bibhead{BrietRegevSaket:FOCS}
{\sc Jop Bri{\"{e}}t, Oded Regev, and Rishi Saket}: Tight hardness of the
  non-commutative {G}rothendieck problem.
\newblock In {\em Proc. 56th FOCS}, pp. 1108--1122. IEEE Comp. Soc. Press,
  {2015}.
\newblock [\epfmtdoi{10.1109/FOCS.2015.72}]

\bibitem{Davie:1984}\bibhead{Davie:1984}
{\sc Alexander Davie}: Lower bound for {$K_G$}.
\newblock Unpublished, 1984.

\bibitem{FeigeS02}\bibhead{FeigeS02}
{\sc Uriel Feige and Gideon Schechtman}: On the optimality of the random
  hyperplane rounding technique for {MAX} {CUT}.
\newblock {\em Random Structures Algorithms}, 20(3):403--440, 2002.
\newblock [\epfmtdoi{10.1002/rsa.10036}]

\bibitem{Grothendieck:1953}\bibhead{Grothendieck:1953}
{\sc Alexander Grothendieck}: {R\'esum\'e de la th\'eorie m\'etrique des
  produits tensoriels topologiques (French)}.
\newblock {\em Bol.\ Soc.\ Mat.\ S\~{a}o Paulo}, 8:1--79, 1953.
\newblock Available from
  \href{https://www.ime.usp.br/acervovirtual/textos/estrangeiros/grothendieck/produits_tensoriels_topologiques/files/produits_tensoriels_topologiques.pdf}{Instituto
  de Matem{\'a}tica e Estat{\'i}stica da Universidade de S{\~a}o Paulo}.

\bibitem{GRSW}\bibhead{GRSW}
{\sc Venkatesan Guruswami, Prasad Raghavendra, Rishi Saket, and Yi~Wu}:
  Bypassing {UGC} from some optimal geometric inapproximability results.
\newblock {\em ACM Trans. Algor.}, 12(1):6:1--6:25, 2016.
\newblock Preliminary version in
  \href{http://dl.acm.org/citation.cfm?id=2095116.2095174}{SODA'12}.
\newblock [\epfmtdoi{10.1145/2737729}]

\bibitem{Haagerup:1985}\bibhead{Haagerup:1985}
{\sc Uffe Haagerup}: The {G}rothendieck inequality for bilinear forms on
  ${C}^*$-algebras.
\newblock {\em Adv. in Math.}, 56(2):93 -- 116, 1985.
\newblock [\epfmtdoi{10.1016/0001-8708(85)90026-X}]

\bibitem{Haagerup:1987}\bibhead{Haagerup:1987}
{\sc Uffe Haagerup}: A new upper bound for the complex {G}rothendieck constant.
\newblock {\em Israel J. Math.}, 60(2):199--224, 1987.
\newblock [\epfmtdoi{10.1007/BF02790792}]

\bibitem{Haagerup:1995}\bibhead{Haagerup:1995}
{\sc Uffe Haagerup and Takashi Itoh}: Grothendieck type norms for bilinear
  forms on {$C^*$}-algebras.
\newblock {\em J. Operator Theory}, 34(2):263--283, 1995.
\newblock Available from \href{http://www.jstor.org/stable/24714900}{JSTOR}.

\bibitem{Hastad:2001}\bibhead{Hastad:2001}
{\sc Johan H{\aa}stad}: Some optimal inapproximability results.
\newblock {\em J. ACM}, 48(4):798--859, 2001.
\newblock Preliminary version in
  \href{http://dx.doi.org/10.1145/258533.258536}{STOC'97}.
\newblock [\epfmtdoi{10.1145/502090.502098}]

\bibitem{Hoeffding:1962}\bibhead{Hoeffding:1962}
{\sc Wassily Hoeffding}: Probability inequalities for sums of bounded random
  variables.
\newblock {\em J. Amer. Statist. Assoc.}, 58(301):13--30, 1963.
\newblock [\epfmtdoi{10.1007/978-1-4612-0865-5\_26}]

\bibitem{HornJohnson}\bibhead{HornJohnson}
{\sc Roger~A. Horn and Charles~R. Johnson}: {\em Matrix Analysis}.
\newblock Cambridge Univ. Press, 1990.

\bibitem{Khot02-color}\bibhead{Khot02-color}
{\sc Subhash Khot}: Hardness results for coloring 3-colorable 3-uniform
  hypergraphs.
\newblock In {\em Proc. 43rd FOCS}, pp. 23--32. IEEE Comp. Soc. Press, {2002}.
\newblock [\epfmtdoi{10.1109/SFCS.2002.1181879}]

\bibitem{Khot:CCC2010}\bibhead{Khot:CCC2010}
{\sc Subhash Khot}: On the unique games conjecture.
\newblock In {\em Proc. 25th IEEE Conf. on Computational Complexity (CCC'10)},
  pp. 99--121. IEEE Comp. Soc. Press, {2010}.
\newblock [\epfmtdoi{10.1109/CCC.2010.19}]

\bibitem{KKMO}\bibhead{KKMO}
{\sc Subhash Khot, Guy Kindler, Elchanan Mossel, and Ryan O'Donnell}: Optimal
  inapproximability results for {MAX}-{CUT} and other 2-variable {CSP}s?
\newblock {\em SIAM J. Comput.}, 37(1):319--357, 2007.
\newblock Preliminary version in
  \href{http://dx.doi.org/10.1109/FOCS.2004.49}{FOCS'04}.
\newblock [\epfmtdoi{10.1137/S0097539705447372}]

\bibitem{Khot:2009}\bibhead{Khot:2009}
{\sc Subhash Khot and Assaf Naor}: Approximate kernel clustering.
\newblock {\em Mathematika}, 55(1-2):129--165, 2009.
\newblock Preliminary version in
  \href{http://dx.doi.org/10.1109/FOCS.2008.33}{FOCS'08}.
\newblock [\epfmtdoi{10.1112/S002557930000098X}, \epfmt{arxiv}{0807.4626}]

\bibitem{Khot:2012}\bibhead{Khot:2012}
{\sc Subhash Khot and Assaf Naor}: Grothendieck-type inequalities in
  combinatorial optimization.
\newblock {\em Comm. Pure Appl. Math.}, 65(7):992--1035, 2012.
\newblock [\epfmtdoi{10.1002/cpa.21398}, \epfmt{arxiv}{1108.2464}]

\bibitem{Khot:maxcutgain}\bibhead{Khot:maxcutgain}
{\sc Subhash Khot and Ryan O'Donnell}: {SDP} gaps and {UGC}-hardness for
  {Max}-{Cut}-{Gain}.
\newblock {\em Theory of Computing}, 5(4):83--117, 2009.
\newblock Preliminary version in
  \href{http://dx.doi.org/10.1109/FOCS.2006.67}{FOCS'06}.
\newblock [\epfmtdoi{10.4086/toc.2009.v005a004}]

\bibitem{Kindler:2010}\bibhead{Kindler:2010}
{\sc Guy Kindler, Assaf Naor, and Gideon Schechtman}: The {UGC} hardness
  threshold of the {$L_p$} {G}rothendieck problem.
\newblock {\em Math. Oper. Res.}, 35(2):267--283, 2010.
\newblock Preliminary version in
  \href{https://dl.acm.org/citation.cfm?id=1347082.1347090}{SODA'08}.
\newblock [\epfmtdoi{10.1287/moor.1090.0425}]

\bibitem{Krivine:79a}\bibhead{Krivine:79a}
{\sc Jean-Louis Krivine}: Constantes de {G}rothendieck et fonctions de type
  positif sur les sph{\`e}res.
\newblock {\em Adv. Math.}, 31(1):16--30, 1979.
\newblock [\epfmtdoi{10.1016/0001-8708(79)90017-3}]

\bibitem{Lindenstrauss:1968}\bibhead{Lindenstrauss:1968}
{\sc Joram Lindenstrauss and Aleksander Pe{\l}czy{\'n}ski}: Absolutely summing
  operators in {$L\sb{p}$}-spaces and their applications.
\newblock {\em Studia Math.}, 29(3):275--326, 1968.
\newblock Available from \href{https://eudml.org/doc/217232}{EuDML}.

\bibitem{Naor:2013}\bibhead{Naor:2013}
{\sc Assaf Naor, Oded Regev, and Thomas Vidick}: Efficient rounding for the
  noncommutative {G}rothendieck inequality.
\newblock {\em Theory of Computing}, 10(11):257--295, 2014.
\newblock Preliminary version in
  \href{http://dx.doi.org/10.1145/2488608.2488618}{STOC'13}.
\newblock [\epfmtdoi{10.4086/toc.2014.v010a011}]

\bibitem{Nesterov:1998}\bibhead{Nesterov:1998}
{\sc Yurii Nesterov}: Semidefinite relaxation and nonconvex quadratic
  optimization.
\newblock {\em Optim. Methods Softw.}, 9(1-3):141--160, 1998.
\newblock [\epfmtdoi{10.1080/10556789808805690}]

\bibitem{ODonnell:2014}\bibhead{ODonnell:2014}
{\sc Ryan O'Donnell}: {\em Analysis of Boolean Functions}.
\newblock Cambridge Univ. Press, 2014.
\newblock Available at
  \href{https://pdfs.semanticscholar.org/51b5/05218ada899916c45f4cd04a7e8df19265f4.pdf}{Semantic
  Scholar}.
\newblock [\epfmtdoi{10.1017/CBO9781139814782}]

\bibitem{Pisier:1978}\bibhead{Pisier:1978}
{\sc Gilles Pisier}: Grothendieck's theorem for noncommutative
  {$C^{\ast}$}-algebras, with an appendix on {G}rothendieck's constants.
\newblock {\em J. Funct. Anal.}, 29(3):397--415, 1978.
\newblock [\epfmtdoi{10.1016/0022-1236(78)90038-1}]

\bibitem{Pisier:2012}\bibhead{Pisier:2012}
{\sc Gilles Pisier}: Grothendieck's theorem, past and present.
\newblock {\em Bull. Amer. Math. Soc.}, 49(2):237--323, 2012.
\newblock [\epfmtdoi{10.1090/S0273-0979-2011-01348-9},
  \epfmt{arxiv}{1101.4195}]

\bibitem{Raghavendra:2009}\bibhead{Raghavendra:2009}
{\sc Prasad Raghavendra and David Steurer}: Towards computing the
  {G}rothendieck constant.
\newblock In {\em Proc. 20th ACM-SIAM Symp. on Discrete Algorithms (SODA'09)},
  pp. 525--534. ACM Press, {2009}.
\newblock \href{https://dl.acm.org/citation.cfm?id=1496770.1496828}{ACM DL}.

\bibitem{RegevV12a}\bibhead{RegevV12a}
{\sc Oded Regev and Thomas Vidick}: Quantum {XOR} games.
\newblock {\em ACM Trans. Comput. Theory}, 7(4):15:1--15:43, 2015.
\newblock Preliminary version in
  \href{http://dx.doi.org/10.1109/CCC.2013.23}{CCC'13}.
\newblock [\epfmtdoi{10.1145/2799560}, \epfmt{arxiv}{1207.4939}]

\bibitem{Rietz:1974}\bibhead{Rietz:1974}
{\sc Ronald~E. Rietz}: A proof of the {G}rothendieck inequality.
\newblock {\em Israel J. Math.}, 19(3):271--276, 1974.
\newblock [\epfmtdoi{10.1007/BF02757725}]

\bibitem{Trevisan:2012}\bibhead{Trevisan:2012}
{\sc Luca Trevisan}: On {K}hot's unique games conjecture.
\newblock {\em Bull. Amer. Math. Soc. (N.S.)}, 49(1):91--111, 2012.
\newblock [\epfmtdoi{10.1090/S0273-0979-2011-01361-1}]

\bibitem{Tsirelson:85b}\bibhead{Tsirelson:85b}
{\sc Boris~S. Tsirel'son}: Quantum analogues of the \uppercase{B}ell
  inequalities. \uppercase{T}he case of two spatially separated domains.
\newblock {\em J. Soviet Math.}, 36(4):557--570, 1987.
\newblock [\epfmtdoi{10.1007/BF01663472}]

\end{thebibliography}

\begin{tocauthors}
\begin{tocinfo}[briet]
Jop Bri\"{e}t\\
Assistant professor\\
 CWI, Amsterdam\\
 The Netherlands\\
 j.briet\tocat{}cwi\tocdot{}nl \\
 \url{http://homepages.cwi.nl/~jop/}
\end{tocinfo}
\begin{tocinfo}[regev]
  Oded Regev\\
  Professor\\
  Courant Institute of Mathematical Sciences\\
  New York University\\
  New York, N.Y.\\
  regev\tocat{}cims\tocdot nyu\tocdot edu \\
  \url{http://www.cims.nyu.edu/~regev/}
\end{tocinfo}
\begin{tocinfo}[saket]
  Rishi Saket\\
  Researcher \\
  IBM Research\\
  Bangalore, India\\
  rissaket\tocat{}in\tocdot ibm\tocdot com \\
  \url{http://researcher.ibm.com/researcher/view.php?person=in-rissaket}
\end{tocinfo}
\end{tocauthors}

\begin{tocaboutauthors}
\begin{tocabout}[briet]
\textsc{Jop Bri\"{e}t} graduated from \href{http://www.cwi.nl}{CWI}
in 2011; his advisor was 
\href{http://homepages.cwi.nl/~buhrman/}{Harry Buhrman}.
He enjoys problems at the intersection of theoretical computer science and 
pure mathematics.
After a stint as a yogi during a postdoc at the 
\href{https://www.cims.nyu.edu/}{Courant Institute} in New York City, 
he returned to rock climbing in the flattest country in the world.
\end{tocabout}
\begin{tocabout}[regev]
\textsc{Oded Regev} graduated from
  \href{http://www.tau.ac.il/}{Tel Aviv University} in 2001 under the
  supervision of \href{http://www.cs.tau.ac.il/~azar/}{Yossi Azar}.
  He spent two years as a postdoc at the 
\href{http://www.ias.edu/}{Institute for Advanced Study}, Princeton, 
and one year at the
  \href{http://www.berkeley.edu/}{University of California, Berkeley}.
  He is currently with the Courant Institute of Mathematical Sciences, and 
enjoys life in NYC\@.
  His research interests include computational and mathematical aspects of 
lattices, quantum computation, and other topics in theoretical computer
  science. 
\end{tocabout}
\begin{tocabout}[saket]
\textsc{Rishi Saket} completed his \phd\ from 
\href{http://www.gatech.edu/}{Georgia Tech} in 2009 under the supervision of
\href{http://www.cs.nyu.edu/~khot/}{Subhash Khot}. After
post-doctoral stints at CMU, Princeton University, and IBM T.\,J.
Watson, he joined IBM Research, Bangalore, India in 2013 where he is
currently a researcher. His interests are in hardness of approximation, 
approximation algorithms, optimization, learning theory, 
and related areas of theoretical and
applied computer sciences.
\end{tocabout}
\end{tocaboutauthors}

\end{document}